\documentclass[conference,letterpaper]{IEEEtran}
\usepackage{amsmath,amssymb,amsthm,mathtools}
\usepackage{booktabs}
\usepackage{cite}
\usepackage{microtype}
\usepackage{url}
\usepackage[hidelinks]{hyperref}
\usepackage{flushend}
\usepackage{xcolor}

\usepackage{tikz}
\usepackage{graphicx}
\usetikzlibrary{arrows.meta,positioning,calc,fit,shapes.misc}

\usepackage{algorithm}
\usepackage{algpseudocode}

\usepackage{amsmath,amssymb}
\usepackage{braket}

\newcommand{\HD}{\operatorname{HD}}
\newcommand{\Dec}{\operatorname{Dec}}
\newcommand{\clip}{\operatorname{clip}}
\newcommand{\Enc}{\operatorname{Enc}}
\newcommand{\Z}{\mathbb{Z}}

\usepackage{algorithm} 
\usepackage{algpseudocode}
\usepackage{amsmath,amssymb,braket}

\newtheorem{theorem}{Theorem}
\newtheorem{lemma}[theorem]{Lemma}
\newtheorem{proposition}[theorem]{Proposition}
\newtheorem{corollary}[theorem]{Corollary}
\newtheorem{definition}{Definition}
\newtheorem{remark}{Remark}
\newtheorem{claim}{Claim}

\DeclareMathOperator{\Tr}{Tr}
\newcommand{\eps}{\varepsilon}
\newcommand{\ketbra}[1]{\lvert #1\rangle\!\langle #1\rvert}

\def\IsDraft{} 

\ifdefined\IsDraft
    \newcommand{\authnote}[2]{{\bf [{\color{red} #1's Note:} {\color{blue} #2}]}}
\else
    \newcommand{\authnote}[2]{}
\fi

\title{Quantum Advantage for Two-Party Differential Privacy}

\author{\IEEEauthorblockN{Daniel Alabi$^1$, Emil T. Khabiboulline$^{2,3}$}
\IEEEauthorblockA{$^1$Electrical and Computer Engineering,
University of Illinois at Urbana-Champaign, Urbana, IL 61801, USA}
\IEEEauthorblockA{$^2$Joint Center for Quantum Information and Computer Science, NIST/University of Maryland, College Park, MD 20742, USA}
\IEEEauthorblockA{$^3$Joint Quantum Institute, NIST/University of Maryland, College Park, MD 20742, USA}
alabid@illinois.edu, ekhabibo@umd.edu}

\hypersetup{
  pdftitle={Quantum Advantage for Two-Party Differential Privacy},
  pdfauthor={Daniel Alabi, Emil T. Khabiboulline},
  pdfsubject={Quantum two-party differential privacy},
  pdfkeywords={differential privacy, quantum communication, Hamming distance, retained workspace, honest-but-curious adversaries}
}

\begin{document}
\maketitle

\begin{abstract}

We introduce information-theoretically private quantum protocols for two-party Hamming distance when both parties must output the same estimate. Classically, for input length $n$, information-theoretic protocols require $\Omega(\sqrt{n})$ error under pure differential privacy and $\Omega(\sqrt{n}/\log n)$ error under strong approximate differential privacy, whereas computational security permits $O(1)$ error. In Klauck's honest, nonpreemptive, message-preserving model, we give an $O(n)$-communication quantum protocol with pure $\varepsilon$ quantum differential privacy (QDP) and expected error at most $\frac{2}{\sinh \varepsilon}+\gamma$, for every $\gamma>0$. For approximate $(\varepsilon, \delta)$ QDP, an exact hockey-stick divergence calculation yields strictly smaller error, while preserving the $O(1)$-versus-$\Omega(\sqrt{n}/\log n)$ separation for $\delta=o(1/n)$. Thus, quantum communication achieves $O(1)$ information-theoretic error, matching the accuracy available classically only under computational assumptions.

The main construction uses a guarded coherent round trip and an equal-Gram rigidity principle that prevents an honest player from retaining input-dependent complementary information. We separate this model from weaker prescribed-channel privacy, which already admits an exact classical realization, and from fully retention-robust security, against which measurement-and-abort attacks remain possible. Therefore, we identify preservation of non-orthogonal quantum messages as a resource for privacy.

\end{abstract}

\begin{IEEEkeywords}
differential privacy, quantum communication, two-party protocols, Hamming distance
\end{IEEEkeywords}

\section{Introduction}

Differential privacy (DP) is best understood in the trusted
curator setting: a
curator collects the data, evaluates a low-sensitivity statistic, and releases
a noisy answer~\cite{dwork2006calibrating,DworkKMMN06}. For Hamming distance, a function with sensitivity of one, this gives
expected error \(O(1/\varepsilon)\), independent of the input length \(n\)~\cite{McGregorMPRTV10}.
In many applications, however, no trusted curator exists~\cite{BNO08,Ben-SassonBHR19,KL14}. Instead, two
mutually distrustful parties hold separate inputs and wish to compute a common
statistic~\cite{kushilevitz1997communication,rao2020communication} while protecting each input from the other party. Requiring privacy
simultaneously against both participants fundamentally changes the problem:
the transcript needed to coordinate the computation may itself reveal
information that a trusted curator would never release~\cite{dwork2004privacy,DSSU17}.

McGregor, Mironov, Pitassi, Reingold, Talwar, and Vadhan initiated the
systematic study of lower bounds in this two-party setting~\cite{McGregorMPRTV10, mcgregor2011limits}. For Hamming distance, they showed
that classical information-theoretic privacy is dramatically less accurate
than central differential privacy: for suitable constant privacy parameters,
purely differentially private protocols require
\(\Omega(\sqrt{n})\) error, and strong approximate privacy requires
\(\Omega(\sqrt{n}/\log n)\) error~\cite{McGregorMPRTV10, mcgregor2011limits}. By contrast, secure function evaluation can
emulate the trusted curator and achieve \(O(1)\) error when privacy is required
only against computationally bounded parties~\cite{Haitner22,Haitner25}.

This led McGregor et al.~\cite{McGregorMPRTV10, mcgregor2011limits} to pose the following question:
\begin{quote}
\emph{What is the largest gap in accuracy between
information-theoretic and
computationally differentially private two-party protocols?}
\end{quote}
They showed that a linear separation is possible for certain functions.
For natural functions such as Hamming distance, however, their results expose
a particularly important gap:
\[
\underbrace{O(1)}_{\substack{\text{central or computational}\\
                             \text{differential privacy}}}
\qquad\text{versus}\qquad
\underbrace{\widetilde{\Omega}(\sqrt{n})}_{\substack{\text{classical
information-theoretic}\\ \text{two-party privacy}}}.
\]
This raises a basic question left outside the classical theory:
\textit{is the loss in accuracy inherent to information-theoretic privacy itself, or
does it arise from the classical, copyable nature of communication?}

The distinction matters both conceptually and cryptographically. If the
classical lower bound extended to quantum protocols,
then to obtain the accuracy of the central model in the
two-party differential privacy would inherently require
computational assumptions or
trusted setup~\cite{MPRV09}. Conversely, an information-theoretic quantum protocol with
\(O(1)\) error would show that computational assumptions are not the only way
to cross the classical accuracy barrier~\cite{Haitner22,Haitner25,BNO08}. It would identify noncopyability~\cite{Quantum_copying_96} and
coherent information flow as genuine resources for
achieving differential privacy, not
merely for reducing
communication complexity or performing cryptographic key distribution~\cite{Klauck07,Bennett-1984-Qc,Vazirani-2014-Fd}.
Moreover, determining exactly which adversarial models permit such an
advantage is essential: a separation obtained only by discarding prescribed
workspace has a different meaning from one that remains secure when a party
may retain arbitrary quantum information~\cite{Vidick_Wehner_2023}.

In this work, we show that the answer is model dependent. Alice holds
\(x\in\{0,1\}^{n}\), Bob holds \(y\in\{0,1\}^{n}\), and both parties must
produce the same random estimate \(\widehat d\) of
\[
d=\HD(x,y)=\bigl|\{i:x_i\neq y_i\}\bigr|.
\]
Under Klauck's honest, nonpreemptive, message-preserving definition~\cite{klauck2004quantum}, we give
the first information-theoretic quantum protocol that crosses the classical
two-party accuracy barrier. For every \(\gamma>0\), the protocol is pure
\(\varepsilon\)-quantum differentially private, communicates \(O(n)\) qubits
and bits, and satisfies
\[
\mathbb{E}\!\left[|\widehat d-d|\right]
<
\frac{2}{\sinh\varepsilon}+\gamma.
\]
Thus, for constant \(\varepsilon\), quantum communication achieves \(O(1)\)
error, whereas every classical information-theoretic protocol under the same
honesty model requires \(\Omega(\sqrt n)\) error. In this model, quantum
communication therefore (information-theoretically)
recovers the accuracy that
is available classically only through computationally secure evaluation~\cite{NN01}.

We also obtain a strict improvement under approximate differential privacy. Rather than
using the pure-DP noise parameter \(\varepsilon\), we compute the privacy using the hockey-stick divergence and select the minimum
\(\alpha_n^\star(\varepsilon,\delta)>\varepsilon\) for a particular
\(\delta\)-budget~\cite{hirche2023quantum}. The resulting protocol is
\((\varepsilon,\delta)\)-quantum differentially private and has expected error
at most
\[
\frac{2}{\sinh\alpha_n^\star(\varepsilon,\delta)}+\gamma
<
\frac{2}{\sinh\varepsilon}+\gamma.
\]
For \(\delta=o(1/n)\), this gives \(O(1)\) quantum error compared with the
classical lower bound \(\Omega(\sqrt n/\log n)\).

\begin{figure}[t]
\centering
\resizebox{\columnwidth}{!}{%
\begin{tikzpicture}[
    >=Latex,
    font=\small,
    party/.style={
        draw,
        rounded corners=6pt,
        minimum width=2.0cm,
        minimum height=2.9cm,
        align=center,
        thick
    },
    smallbox/.style={
        draw,
        rounded corners=5pt,
        align=center,
        inner sep=4pt,
        thick
    },
    dashedbox/.style={
        draw=orange!80!black,
        dashed,
        rounded corners=5pt,
        align=center,
        inner sep=5pt,
        thick
    },
    bigcallout/.style={
        draw,
        rounded corners=6pt,
        align=center,
        inner sep=6pt,
        thick
    }
]

\node[font=\bfseries\large] at (0,4.8)
{Classical two-party DP};

\node[party] (alice) at (-2.9,1.8) {\textbf{Alice}\\[0.25cm]
input\\[0.2cm]
$ x \in \{0,1\}^n $};

\node[party] (bob) at (2.9,1.8) {\textbf{Bob}\\[0.25cm]
input\\[0.2cm]
$ y \in \{0,1\}^n $};

\node at (0,2.75) {$m_1,m_2,\ldots,m_r$};
\node at (0,2.15) {$\vdots$};

\draw[->, thick, orange!85!black] ($(alice.east)+(0.1,0.55)$) -- ($(bob.west)+(-0.1,0.55)$);
\draw[<-, thick, orange!85!black] ($(alice.east)+(0.1,-0.05)$) -- ($(bob.west)+(-0.1,-0.05)$);

\node[dashedbox] (transcript) at (0,0.65) {classical transcript $T$};

\node[draw, minimum width=0.45cm, minimum height=0.6cm, thick] (copyicon) at (0,-0.75) {};
\draw[thick] ($(copyicon.center)+(-0.08,-0.12)$) rectangle ($(copyicon.center)+(0.08,0.12)$);
\draw[thick] ($(copyicon.center)+(0.08,0.12)$) -- ($(copyicon.center)+(0.0,0.12)$) -- ($(copyicon.center)+(0.0,0.2)$) -- ($(copyicon.center)+(0.08,0.12)$);

\draw[dashed, thick, ->] (transcript.south) -- (copyicon.north);

\node[align=center] at (1.7,-0.7) {\emph{privacy proofs}\\[-1mm]\emph{reduce to}\\[-1mm]\emph{transcript views}};

\node at (0,-1.7) {copyable transcript};
\draw[dashed, thick, ->] (-0.8,-1.45) -- (-0.16,-0.9);
\draw[dashed, thick, ->] (0.8,-1.45) -- (0.16,-0.9);

\node[smallbox, minimum width=3.2cm] (out) at (0,-3.15) {common output $\hat d$};

\draw[dashed, thick, ->] ($(alice.south)+(0,-0.05)$) |- (out.west);
\draw[dashed, thick, ->] ($(bob.south)+(0,-0.05)$) |- (out.east);

\node[bigcallout, minimum width=7.2cm, minimum height=2.2cm] at (0,-5.7) {%
For Hamming distance,\\
classical information-theoretic $2$-party DP\\
has large error~\cite{McGregorMPRTV10,mcgregor2011limits}:\\[1mm]
pure DP: \quad $\Omega(\sqrt{n})$,\\[1mm]
strong (i.e., $\delta=o(1/n)$) approximate DP: \quad $\Omega\!\left(\dfrac{\sqrt{n}}{\log n}\right)$};

\end{tikzpicture}%
}
\caption{Classical two-party differential privacy. Classical privacy proofs reduce to transcript views. The interaction transcript can be copied.}
\label{fig:classical-2party-dp}
\end{figure}

\begin{figure}[t]
\centering
\resizebox{\columnwidth}{!}{%
\begin{tikzpicture}[
    >=Latex,
    font=\small,
    party/.style={
        draw,
        rounded corners=6pt,
        minimum width=2.0cm,
        minimum height=2.9cm,
        align=center,
        thick
    },
    smallbox/.style={
        draw,
        rounded corners=5pt,
        align=center,
        inner sep=4pt,
        thick
    },
    bluebox/.style={
        draw=blue!70!black,
        rounded corners=6pt,
        align=center,
        inner sep=5pt,
        thick
    },
    bigcallout/.style={
        draw=blue!70!black,
        rounded corners=6pt,
        align=center,
        inner sep=6pt,
        thick
    }
]

\node[font=\bfseries\large] at (0,4.8)
{Quantum two-party DP (KHNP model)};

\node[party] (alice) at (-2.9,1.8) {\textbf{Alice}\\[0.25cm]
input\\[0.2cm]
$ x \in \{0,1\}^n $};

\node[party] (bob) at (2.9,1.8) {\textbf{Bob}\\[0.25cm]
input\\[0.2cm]
$ y \in \{0,1\}^n $};

\node[bluebox, minimum width=2.8cm] (nonorth) at (0,3.75) {non-orthogonal messages};

\draw[->, thick, blue!80!black] ($(alice.east)+(0.1,0.65)$) -- ($(bob.west)+(-0.1,0.65)$);
\node[blue!80!black] at (0,2.85) {$|\psi_{x,a}\rangle$};

\draw[<-, thick, blue!80!black] ($(alice.east)+(0.1,-0.2)$) -- ($(bob.west)+(-0.1,-0.2)$);
\node[blue!80!black] at (0,2.0) {$|\phi^{\,y,b}_{x,a}\rangle$};

\draw[densely dotted, thick, blue!80!black] (nonorth.south) -- (0,3.15);

\draw[->, thick, orange!85!black] ($(alice.east)+(0.1,-1.15)$) -- ($(bob.west)+(-0.1,-1.15)$);
\node at (0,0.8) {$Z \in \mathbb{Z}_{4n+1}$};

\node[draw, minimum width=0.5cm, minimum height=0.65cm, thick, dashed] (copyicon) at (0,-1.15) {};
\draw[thick, dashed] ($(copyicon.center)+(-0.08,-0.12)$) rectangle ($(copyicon.center)+(0.08,0.12)$);
\draw[red, thick] ($(copyicon.north west)+(-0.05,0.05)$) -- ($(copyicon.south east)+(0.05,-0.05)$);
\draw[red, thick] ($(copyicon.south west)+(-0.05,-0.05)$) -- ($(copyicon.north east)+(0.05,0.05)$);

\draw[dashed, thick, ->] (copyicon.north) -- (0,0.35);

\node[align=center] at (1.45,-1.15) {\emph{cannot copy}\\[-1mm]\emph{without}\\[-1mm]\emph{disturbance}};

\node[bigcallout, minimum width=5.5cm] at (0,-2.55) {equal-Gram rigidity forces\\ retained
complementary state\\ to be input-independent};

\node[smallbox, minimum width=3.8cm] (out) at (0,-4.0) {common output $\hat d=\mathrm{Dec}(Z)$};

\draw[dashed, thick, ->] ($(alice.south)+(0,-0.05)$) |- (out.west);
\draw[dashed, thick, ->] ($(bob.south)+(0,-0.05)$) |- (out.east);

\node[bigcallout, minimum width=6.9cm, minimum height=2.0cm] at (0,-5.95) {%
In the KHNP model,\\
quantum communication achieves\\[1mm]
$O(1)$ error for constant privacy parameters};

\end{tikzpicture}%
}
\caption{Quantum two-party differential privacy in the KHNP model. Exact preservation of non-orthogonal messages prevents transcript copying and yields the quantum advantage.}
\label{fig:quantum-2party-dp}
\end{figure}

The source of the separation is that classical messages may be copied without
disturbance, so a classical honest observer may retain the complete
transcript~\cite{Vidick_Wehner_2023}. A quantum observer constrained to return the prescribed pure
message cannot do the same. Our construction exploits an
equal-Gram rigidity principle~\cite{CJW04,Cock_2000,Wootters1982}: exact preservation of a family of non-orthogonal
pure messages forces the observer's complementary state to be independent of
the protected input. A small, input-independent coherent guard makes this
rigidity exact, moving the approximation entirely into a rare fallback event
that affects accuracy rather than privacy.

Finally, we delineate the boundary of this advantage. Under a fixed
prescribed-channel view model, reversible classical computation already
realizes the private functionality exactly, so that view model cannot support
a meaningful quantum--classical separation~\cite{KSV02}. At the other extreme, our
protocol is not secure against arbitrary retention-capable 
quantum strategies:
a receiver that measures the first message and aborts can distinguish adjacent
inputs. We formalize the corresponding retention obstruction and provide a
classical randomized-response baseline for this stronger model, which matches the accuracy guarantees of
McGregor et al.~\cite{McGregorMPRTV10,mcgregor2011limits}.
Thus, quantum
communication resolves the Hamming distance accuracy gap under Klauck's exact
message-preserving honesty condition, while the
question of whether it improves accuracy under
fully malicious security remains open.

We consider a range of view models: PC, KHNP, and RR
which stand for \emph{prescribed-channel},
\emph{Klauck's honest nonpreemptive view},
and \emph{retention-robust} views, respectively.
In the PC view model, the parties execute the
fixed local channels specified by the protocol, including their prescribed
discard operations.
Conceptually, KHNP captures an honest-but-curious setting: a party may retain arbitrary private information, but must send the prescribed messages and complete the protocol. In contrast, our retention-robust (RR) model is malicious, allowing arbitrary deviations, including measurement, message modification, retention, and abort.

\subsection{Contributions}

In Table~\ref{tab:status}, we summarize our main results and
conclusions. We work in the standard idealized communication model allowing exact real-valued
private randomness and arbitrary exact local unitaries; local computational complexity
and finite-precision synthesis are not accounted for.
Our main results are as follows.

\begin{enumerate}
\item We show that a common-output functionality, via distributed noise
generation, satisfies \(\eps\)-DP and has error at most \(2/\sinh\eps\).
The proof of privacy and accuracy follows from
the construction via the use of geometric arguments and
interval decoding.
\item Fixed 
prescribed channels (Definition~\ref{def:pc}) can realize this functionality exactly
      with a classical reversible protocol.  In Klauck's stronger model,
      a round trip protocol gives pure \(\eps\)-QDP, error at most
      \(2/\sinh\eps+\gamma\), and \(O(n)\) communication; exact
      noise calibration gives strictly smaller noise
      for a protocol that satisfies \((\eps,\delta)\)-QDP.
\item We define and analyze the
Klauck-honest nonpreemptive view (KHNP) (Definition~\ref{def:khnp}).
The classical version of
KHNP  privacy includes the transcript-copying observer, so
      the classical pure-DP \(\Omega(\sqrt n)\) lower bound transfers and
      gives a same-model quantum accuracy separation.
\item We formalize retention-robust QDP (Definition~\ref{def:rr}), accounting for both adversarial strategies and side information.

\item Randomized response with geometric output noise gives pure
      \(\eps\)-QDP in the retention-robust class (and hence
      \((\eps,\delta)\)-QDP), with an explicit error bound stated in
      Theorem~\ref{thm:rr}.
\end{enumerate}

Figures~\ref{fig:classical-2party-dp} and~\ref{fig:quantum-2party-dp} summarize the differences and similarities between the
protocols that satisfy two-party classical DP versus two-party
quantum DP.

\subsection{Main Results and Implications}

\begin{table*}[!t]
\caption{Summary of the established privacy and error bounds.
We analyze these bounds under different view models:
PC is the Prescribed Channel View (Definition~\ref{def:pc});
KHNP is the Klauck-Honest Nonpreemptive View (Definition~\ref{def:khnp});
RR is the Retention-Robust Adversarial View (Definition~\ref{def:rr}). 
CPTP stands for Completely Positive Trace-Preserving.}
\label{tab:status}
\centering
\footnotesize
\begin{tabular}{@{}p{2.4cm}p{2.4cm}p{11.5cm}@{}}
\toprule
View model & Privacy & Status and utility gap (error)\\
\midrule
PC
& pure \(\eps\)
& Proposition~\ref{prop:pc}: exact classical common output,
\(2/\sinh\eps\).\\
\addlinespace
KHNP
& pure \(\eps\)
& Theorem~\ref{thm:pure-khnp}: quantum common output,
\(<2/\sinh\eps+\gamma\), \(O(n)\) communication.\\
\addlinespace
KHNP
& \((\eps,\delta)\)
& Theorem~\ref{thm:approx-khnp}:
\(\le2/\sinh(\alpha_n^\star)+\gamma\) for some \(\alpha_n^\star > \varepsilon\), \(O(n)\) communication.\\
\addlinespace
Classical KHNP
& pure and approximate
& Corollary~\ref{cor:advantage}: \(\Omega(\sqrt n)\) pure error and
\(\Omega(\sqrt n/\log n)\) error for \(\delta=o(1/n)\), versus \(O(1)\)
quantum error.\\
\addlinespace
RR, all CPTP
& pure \(\eps\)
& Theorem~\ref{thm:rr}:
\(\sqrt n/(2\sinh(\eps/2))+1/(2\sinh^2(\eps/2))\).\\
\addlinespace
RR, all CPTP
& \((\eps,\delta)\)
& The pure protocol of Theorem~\ref{thm:rr} gives the same bound for
every \(\delta\ge0\); improvement is open.\\
\bottomrule
\end{tabular}
\end{table*}

Our main result is an information-theoretic quantum advantage for the
two-party computation of Hamming distance under Klauck's honest,
nonpreemptive, message-preserving privacy model~\cite{klauck2004quantum}. Alice holds
\(x\in\{0,1\}^{n}\), Bob holds \(y\in\{0,1\}^{n}\), and both parties must
output the same estimate of
$d=\HD(x,y).$
The protocol uses quantum communication but requires no computational
assumption, trusted setup, prior entanglement, bounded-storage assumption, or
secure erasure~\cite{BCW98}.

At the \textit{ideal} functional level, Alice and Bob independently sample cyclic
geometric noise shares \(A,B\in\mathbb{Z}_{4n+1}\) and release
\[
Z=d+A+B \pmod{4n+1}.
\]
Both parties apply the same interval decoder to \(Z\). Neither party knows the
other party's noise share, so neither can remove all of the noise from
its own view. The resulting common-output functionality is pure
\(\eps\)-differentially private and has expected error at most
\[
\frac{2}{\sinh\eps}.
\]
The factor of two reflects the common-output requirement for the mistrustful parties: privacy for both
inputs requires both independently generated noise shares to appear in the
same released value.

The quantum protocol implements this functionality through a \textit{guarded coherent
round trip}~\cite{klauck2004quantum}. 
By \textit{guarded}, we mean that every input encoding contains a small common fallback component;
by \textit{coherent}, we mean that Alice sends a superposition rather than a classical input message;
and by \textit{round trip}, we mean that
Bob coherently computes the answer into the message and returns the entire quantum state to Alice.
So, Alice prepares a superposition containing a dominant
data-dependent branch and a small, input-independent \textit{guard branch}. Bob
coherently computes the noisy Hamming distance release, appends it to the
message, and returns the entire prescribed pure state. Alice then measures the
branch and output registers and communicates the resulting classical release
to Bob. Except on the input-independent fallback branch, both parties obtain
the output of the ideal distributed-noise functionality.

\begin{theorem}[Pure QDP for Klauck-honest players. Informal Version of Theorem~\ref{thm:pure-khnp}]
\label{thm:intro-pure-khnp}
For every \(n\ge1\), \(\eps>0\), and \(\gamma>0\), there
exists a protocol that provides
information-theoretic pure \(\eps\)-QDP for
the complete terminal view of every Klauck-honest nonpreemptive observing
strategy.  In an honest execution both parties produce a common output,
\[
 \mathbb E|\widehat d-\HD(x,y)|
 <\frac{2}{\sinh\eps}+\gamma,
\]
and total communication is \(2n+2+O(\log n)=O(n)\) qubits and bits in the
standard unitary communication model.
\end{theorem}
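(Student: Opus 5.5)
The proof has three parts, in order: first the ideal common-output functionality, then a coherent implementation of it, then privacy of that implementation against Klauck-honest observers.

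\emph{Step 1: the ideal functionality.} Alice and Bob independently sample cyclic geometric shares $A,B\in\Z_{4n+1}$ with $\Pr[A=a]\propto e^{-\eps\,|a|_{\circ}}$, where $|a|_\circ$ is the cyclic distance to $0$. The functionality releases $Z=d+A+B \bmod (4n+1)$, and the common output is $\widehat d=\Dec(Z)$ for an interval decoder $\Dec$.

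Privacy of the release against Alice goes as follows. Alice's view is $(x,A,Z)$, so changing $y$ to an adjacent $y'$ shifts $d$ by at most one. The release then differs only through $B$ shifted by one, and the cyclic geometric law has likelihood ratio at most $e^{\eps}$ under a unit cyclic shift. The same argument with the roles exchanged handles Bob.

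For accuracy, the modulus $4n+1$ leaves enough room that the decoder $\Dec$ maps $Z$ back into $[0,n]$ and never increases the error, since clipping to $[0,n]$ is a contraction. This gives
\[
\mathbb E|\widehat d-d|\le \mathbb E|A+B|\le 2\,\mathbb E|A| \le \frac{2}{\sinh\eps}.
\]
The last inequality holds because a two-sided geometric variable with parameter $e^{-\eps}$ has mean absolute value $\frac{2e^{-\eps}}{1-e^{-2\eps}}=\frac{1}{\sinh\eps}$, and cyclic truncation only decreases this.

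\emph{Step 2: the guarded round trip.} Alice prepares
\[
\sqrt{1-\eta}\,\ket{0}_G\ket{x}\sum_a\sqrt{p(a)}\ket{a}\ket{a}_{A'}+\sqrt{\eta}\,\ket{1}_G\ket{\mathrm{fb}},
\]
keeping $A'$ and sending $n$ qubits of $x$ plus the $O(\log n)$-qubit share register. Bob, controlled on the guard being $0$, coherently adds $\sum_b\sqrt{q(b)}\ket{b}$ and writes $d+a+b$ into an output register. He then uncomputes his share into a purified register that he must return, or equivalently folds it into the returned message. He returns the whole state to Alice. Alice measures $G$ and the output register and announces $Z$ classically; on $G=1$ she announces an input-independent fallback value.

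Under honest execution, the output distribution is a $(1-\eta)$-mixture of the ideal functionality and an $\eta$ fallback. The fallback's error is at most $n$, so the expected error is below $2/\sinh\eps+\eta n$, and choosing $\eta<\gamma/n$ gives the stated bound. The communication count is $n+1+O(\log n)$ qubits each way plus $O(\log n)$ classical bits, which totals $2n+2+O(\log n)$.

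\emph{Step 3: privacy against Klauck-honest observers.} This is the main obstacle. A Klauck-honest Bob may apply any isometry $V$ to the message together with his ancilla, as long as the returned message is exactly the prescribed pure state $\ket{\phi^{y}_{x}}$. Then $V\ket{\psi_x}\ket{0}=\ket{\phi^y_x}\otimes\ket{\omega_x}$ exactly, for every $x$.

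I would invoke the equal-Gram rigidity principle: taking inner products over pairs $x,x'$ gives
\[
\langle\psi_x|\psi_{x'}\rangle=\langle\phi^y_x|\phi^y_{x'}\rangle\,\langle\omega_x|\omega_{x'}\rangle .
\]
The guard branch makes every overlap strictly nonzero, since both states contain the common term $\eta\ket{1}\ket{\mathrm{fb}}$. I would then check that Bob's coherent computation preserves Gram entries exactly, i.e. $\langle\phi^y_x|\phi^y_{x'}\rangle=\langle\psi_x|\psi_{x'}\rangle$; this requires designing the return state so that Bob's prescribed action is itself an isometry on Alice's messages. With nonzero and equal Gram entries, $\langle\omega_x|\omega_{x'}\rangle=1$, so $\ket{\omega_x}$ is independent of $x$.

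Bob's retained state therefore carries no information about $x$ beyond his prescribed view. The same holds for Alice, by the symmetric argument on the returned message, since she only measures the prescribed registers. The terminal view then reduces to the announced $Z$ together with the party's own data, and Step 1 bounds that view. The fallback branch is input-independent, so mixing it in preserves the $e^{\eps}$ ratio.

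The delicate point is showing that the Gram matrix is preserved while Bob's noise is embedded coherently. I would first try making Bob's share register part of the returned message in a $y$-dependent but $x$-unitary way, so that his action is a fixed unitary per $y$.
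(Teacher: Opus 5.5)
There is a genuine gap, and it is in Step 2, not in the rigidity step you flagged.

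Your Step 1 argument for Bob's privacy needs $B$ to be a classical secret that Alice never sees. Your coherent implementation destroys this in every variant you describe:
\begin{itemize}
\item If Bob returns his share register (or its purification) to Alice, she holds $a$ (in $A'$), $b$ and $d+a+b$. Measuring all three gives her $d$ exactly.
\item If Bob instead coherently uncomputes $b$ into the output register, Alice receives the pure state $\sum_b\sqrt{q(b)}\ket{d+a+b}$ on the data branch. Jointly with $A'$ this is still a pure state, and it depends on $d$. Two distinct pure states $\rho\neq\sigma$ never satisfy $\rho\preceq e^{\eps}\sigma$: a vector orthogonal to $\sigma$ but not to $\rho$ witnesses the failure. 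So pure QDP fails, and even the hockey-stick divergence is strictly positive. A superposition over noise values is not noise.
\item If Bob keeps the purification, his prescribed outgoing message is no longer pure, so the equal-Gram claim does not apply as stated.
\end{itemize}
Your Alice-side argument (``she only measures the prescribed registers'') is also not available in KHNP. After the return, Alice's only outgoing message is the classical $Z$, so she may process the returned state arbitrarily. No symmetric rigidity constraint binds her.

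The paper avoids all of this by keeping both shares classical, with honesty imposed conditional on each realization of the private tapes.
\begin{itemize}
\item Alice sends $\sqrt{1-\kappa}\ket{0}\ket{x,a}+\sqrt{\kappa}\ket{1}\ket{\perp}$.
\item Bob, for his fixed $(y,b)$, applies the isometry that appends $f_{y,b}(x,a)$ on the data branch and $z_\perp$ on the guard branch. He keeps $b$.
\item Because $\ket{x,a}$ survives in the returned state, $\langle\psi_i|\psi_j\rangle=\kappa=\langle\phi_i^{y,b}|\phi_j^{y,b}\rangle$ for $i\neq j$ is immediate. Your ``delicate point'' disappears, and Claim~\ref{clm:gram} makes Bob's complement independent of $(x,a)$.
\item Alice's privacy needs no rigidity. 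For fixed $(x,a)$, the returned state depends on $(y,B)$ only through $Z$, via a fixed encoding. Every Alice strategy is therefore a $y$-independent channel applied to $(x,A,Z)$, and data processing of Theorem~\ref{thm:ideal} finishes the proof.
\end{itemize}
Making Alice's share coherent with a retained $A'$ is likewise unnecessary, and it complicates the Bob-side argument, since Bob's received message becomes mixed.

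A minor point: only the density operator of the outgoing message is prescribed, so you should write $V\ket{\psi_x}\ket{0}=e^{\mathrm{i}\theta_x}\ket{\phi_x}\ket{\omega_x}$. You then obtain $|\langle\omega_x|\omega_{x'}\rangle|=1$ rather than $\langle\omega_x|\omega_{x'}\rangle=1$. That still suffices.
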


Theorem~\ref{thm:pure-khnp} establishes pure quantum differential privacy for this protocol.
For every \(n\geq 1\), \(\varepsilon>0\), and \(\gamma>0\), the protocol is
pure \(\varepsilon\)-QDP for the complete terminal view of every
Klauck-honest nonpreemptive observing party. It communicates
\[
2n+2+O(\log n)=O(n)
\]
qubits and classical bits and satisfies
\[
\mathbb{E}\!\left[
  \left|\widehat d-\HD(x,y)\right|
\right]
<
\frac{2}{\sinh\varepsilon}+\gamma.
\]
For constant \(\varepsilon\) and arbitrarily small constant \(\gamma\), the
expected error is therefore \(O(1)\), independent of the input length.

This gives a genuine same-model separation from classical
information-theoretic protocols. A classical Klauck-honest party may copy
every classical message while still sending exactly the prescribed outgoing
messages and completing the protocol. Its terminal view can therefore contain
the full classical transcript and private randomness. The lower bounds of
McGregor et al.\ consequently apply under the same honesty requirement:
classical pure-DP protocols for Hamming distance require
\(\Omega(\sqrt n)\) 
error for suitable constant privacy parameters. Hence,
under Klauck's message-preserving model, quantum communication improves the
optimal information-theoretic accuracy from polynomial error to constant
error:
\[
\underbrace{O(1)}_{\text{quantum information-theoretic}}
\qquad\text{versus}\qquad
\underbrace{\Omega(\sqrt n)}_{\text{classical information-theoretic}}.
\]
In particular, the quantum protocol attains information-theoretically the
accuracy of the central model
that classical two-party protocols can obtain through
computationally secure function evaluation~\cite{Haitner22, Haitner25}.

\begin{theorem}[Noise-calibrated approximate KHNP-QDP. Informal Version of Theorem~\ref{thm:approx-khnp}]
\label{thm:intro-approx-khnp}
Let \(n\ge1\), \(\eps>0\), \(0<\delta<1\), and \(\gamma>0\).
There exists a protocol
such that every Klauck-honest nonpreemptive terminal view satisfies
\((\eps,\delta)\)-QDP, both honest parties produce a common output, and
\[
\mathbb E|\widehat d-d|
\le\frac{2}{\sinh\alpha_n^\star}+\gamma
<\frac{2}{\sinh\eps}+\gamma,
\]
for some \(\alpha_n^\star >\varepsilon\).
Communication is \(O(n)\) qubits.
\end{theorem}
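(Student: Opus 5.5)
The plan is to reuse the guarded coherent round trip of Theorem~\ref{thm:intro-pure-khnp}, changing only the geometric parameter of the cyclic noise shares. The parameter \(\varepsilon\) is replaced by a larger \(\alpha\) chosen so that the worst-case adjacent-pair hockey-stick divergence \(E_{e^\varepsilon}\) equals \(\delta\).

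\emph{Reduction to the released value.} The equal-Gram rigidity argument used for the pure result gives that, for every Klauck-honest nonpreemptive observer, the complementary retained state is independent of the protected input. Hence the terminal view is an input-independent CPTP post-processing of the classical release, together with the observer's own input and noise share and the input-independent guard/fallback branch. By data processing for the quantum hockey-stick divergence~\cite{hirche2023quantum}, it suffices to bound \(E_{e^\varepsilon}\) between the output distributions of adjacent inputs. Fix the observer's own noise share, say Alice's \(A=a\). The remaining randomness in \(Z\) is then Bob's share \(B\), distributed as a cyclic geometric variable on \(\mathbb{Z}_{4n+1}\) with parameter \(\alpha\). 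Adjacent inputs shift \(d\) by one. The guard branch mixes in an identical distribution under both inputs, so by joint convexity it contributes at most a factor \((1-\eta)\le 1\) to the divergence and can be ignored.

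\emph{Exact divergence calculation.} Write \(p_\alpha(k)\propto e^{-\alpha\,\mathrm{dist}(k,0)}\) on \(\mathbb{Z}_{4n+1}\), where dist is cyclic distance. Define
\[
f_n(\alpha)=\sum_{k}\bigl(p_\alpha(k)-e^{\varepsilon}p_\alpha(k-1)\bigr)_+ .
\]
The likelihood ratio \(p_\alpha(k)/p_\alpha(k-1)\) equals \(e^{\alpha}\) on one half of the cycle and \(e^{-\alpha}\) on the other, with one wraparound point where the ratio is \(e^{\pm 2n\alpha}\)-type. For \(\alpha\le\varepsilon\) the sum is zero except possibly at the wraparound, so I will check the cyclic ratio carefully. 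Since \(\mathrm{dist}\) changes by exactly one between neighbors on the odd-length cycle, I expect \(f_n(\alpha)=0\) for \(\alpha\le\varepsilon\). For \(\alpha>\varepsilon\) the positive part is collected over the increasing half-arc, which gives the closed form
\[
f_n(\alpha)=(1-e^{\varepsilon-\alpha})\,P_\alpha[\text{half-arc}].
\]
This function is continuous and strictly increasing in \(\alpha\), with limit \(\tfrac12-o(1)\) or larger. Define
\[
\alpha_n^\star(\varepsilon,\delta)=\sup\{\alpha: f_n(\alpha)\le\delta\}.
\]
Then \(\alpha_n^\star>\varepsilon\) by continuity and \(f_n(\varepsilon)=0<\delta\). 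If \(f_n\) never reaches \(\delta\), the supremum is capped at some finite value chosen for definiteness. Monotonicity would make \(\alpha_n^\star\) a minimum-noise choice.

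\emph{Accuracy and communication.} Accuracy follows verbatim from the ideal-functionality bound \(2/\sinh\alpha\) established for the pure case, since that bound was stated for a generic geometric parameter, plus the fallback error \(\le\gamma\) when the guard amplitude is chosen small relative to \(n\). Because \(\sinh\) is increasing and \(\alpha_n^\star>\varepsilon\), the strict inequality \(2/\sinh\alpha_n^\star<2/\sinh\varepsilon\) follows. Communication is unchanged, at \(O(n)\) qubits.

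\emph{Main obstacle.} The hardest step is showing that the rigidity reduction from the pure proof gives a genuine CPTP post-processing of the release, uniformly over observers, so that \(\delta\) transfers without loss. If rigidity only gives independence conditioned on the non-fallback branch, I would handle the mixture by quasi-convexity of \(E_\gamma\) over the input-independent branch weights.
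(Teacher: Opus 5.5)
Your proposal is correct and follows the paper's proof. It runs the same guarded round trip at a recalibrated parameter and reduces every KHNP view to input-independent post-processing of the ideal view via equal-Gram rigidity and data processing. It then uses the exact half-arc formula $(1-e^{\eps-\alpha})P_\alpha[\text{half-arc}]$, which is the paper's $\Delta_{n,\eps}(\alpha)$, and the error bound $2/\sinh\alpha+\kappa n$. One small difference: your supremum definition of $\alpha_n^\star$ needs only continuity, whereas the paper proves strict monotonicity to obtain a unique root.

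Two side remarks in your proposal are wrong and should be corrected, though neither affects the conclusion. First, the wraparound ratio is exactly $1$, not $e^{\pm 2n\alpha}$: the residues $2n$ and $2n+1$ both lie at cyclic distance $2n$. Relatedly, on an odd cycle the distance cannot change by exactly one at every step; it changes by at most one. This is what makes $f_n(\eps)=0$ and the half-arc formula exact. Second, $f_n(\alpha)\to 1$ as $\alpha\to\infty$, because the increasing half-arc contains the mode $0$. So every $\delta\in(0,1)$ is attained and no cap is needed.
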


Theorem~\ref{thm:approx-khnp} shows that approximate privacy provides a further, strict accuracy
improvement. A direct application of the pure-DP protocol would yield
\((\varepsilon,\delta)\)-QDP with the same noise parameter
\(\alpha=\varepsilon\), leaving the \(\delta\)-budget unused. Instead, we
compute the exact quantum hockey-stick divergence of the finite cyclic
geometric distribution. For every \(0<\delta<1\), this identifies a unique
parameter
\[
\alpha_n^\star(\varepsilon,\delta)>\varepsilon
\]
for which the ideal functionality is exactly
\((\varepsilon,\delta)\)-QDP. Running the same guarded quantum protocol with
noise shares sampled using \(\alpha_n^\star(\varepsilon,\delta)\) yields
\begin{align}
&\mathbb{E}\!\left[
  \left|\widehat d-\HD(x,y)\right|
\right]
\leq\\
&\leq
\frac{2}{\sinh\alpha_n^\star(\varepsilon,\delta)}+\gamma\\
&<
\frac{2}{\sinh\varepsilon}+\gamma.
\end{align}
Thus approximate privacy does not merely inherit the pure-privacy guarantee:
the allowed failure probability is converted directly into a more
concentrated noise distribution and strictly better utility. For strong
approximate privacy, \(\delta=o(1/n)\), the quantum error remains \(O(1)\),
whereas the corresponding classical information-theoretic lower bound is
\(\Omega(\sqrt n/\log n)\).

These results give a model-dependent answer to the accuracy-gap question of
McGregor et al~\cite{McGregorMPRTV10,mcgregor2011limits}. Under exact message-preserving honesty, quantum communication
removes the polynomial accuracy loss suffered by classical
information-theoretic protocols. The advantage does not, however,
automatically extend to arbitrary malicious or retention-capable quantum
strategies. A receiver allowed to measure the first message and abort can
distinguish adjacent inputs. Accordingly, our result should be interpreted as
showing that noncopyable communication is an information-theoretic privacy
resource under exact message-preservation~\cite{Quantum_copying_96}, rather than as a
simulation-secure~\cite{Lindell2017} or maliciously secure two-party protocol~\cite{LindellP07}. Whether quantum
communication improves accuracy under fully retention-robust security remains
open.

\subsection{Main Techniques}

The proof of our main results (Theorem~\ref{thm:pure-khnp} and~\ref{thm:approx-khnp}) combines four ingredients: distributed cyclic-geometric noise,
coherent guarded communication, equal-Gram rigidity, and exact
hockey-stick-divergence calibration.

\paragraph{Distributed noise for a common private output}
The starting point is a common-output functionality in which the two parties
independently sample
\[
A,B\sim q_{\alpha},
\qquad
q_{\alpha}(g)
\propto
\exp\!\left(-\alpha d_m(g,0)\right),
\]
over the cycle \(\mathbb{Z}_{m}\), where \(m=4n+1\) and \(d_m\) is circular
distance. The functionality releases
\[
Z=\HD(x,y)+A+B \pmod m.
\]
Because changing one input bit changes the Hamming distance by at most one,
the ratio between adjacent output probabilities is controlled by the
one-step likelihood ratio of \(q_\alpha\). 

Privacy of Alice's input from Bob is provided by Alice's noise share $A$, which is unknown to Bob.
Likewise, privacy of Bob's input from Alice
is provided by Bob's noise share $B$, 
which is unknown to Alice.

The choice \(m=4n+1\) ensures that the valid distance interval
\(\{0,\ldots,n\}\) is separated from its wraparound copies by a sufficiently
long complementary arc. A nearest-interval decoder
$\Dec(\cdot)$ therefore satisfies
\[
\left|\Dec(Z)-d\right|
\leq
d_m(A+B,0)
\leq
d_m(A,0)+d_m(B,0).
\]
The expected magnitude of each cyclic-geometric share is at most
\(1/\sinh\alpha\), which gives the error bound
\(2/\sinh\alpha\).

\paragraph{A coherent guard that moves approximation into utility}
A direct coherent encoding of the classical input would produce orthogonal
states for different inputs. An observing party could then copy or measure the
input label without disturbing the prescribed outgoing message. To prevent
this, Alice adds a small common component to every encoded state. For
\(i=(x,a)\), she sends
\[
\ket{\psi_i}
=
\sqrt{1-\kappa}\ket{0}\ket{i}
+
\sqrt{\kappa}\ket{1}\ket{\perp},
\]
where the second branch is independent of \(i\). Consequently, for every
\(i\neq j\),
\[
\braket{\psi_i|\psi_j}=\kappa>0.
\]
Bob coherently computes
\[
f_{y,b}(x,a)
=
\HD(x,y)+a+b\pmod m
\]
on the data branch and appends a fixed fallback output on the guard branch,
returning
\[
\ket{\phi^{y,b}_i}
=
\sqrt{1-\kappa}
 \ket{0}\ket{i}\ket{f_{y,b}(i)}
+
\sqrt{\kappa}
 \ket{1}\ket{\perp}\ket{z_\perp}.
\]
The input and returned state families have exactly the same Gram matrix.

The guard probability is chosen as
\[
\kappa=\frac{\gamma}{n+\gamma}.
\]
On the dominant branch, the parties obtain the ideal noisy release. On the
guard branch, they output a fixed fallback value, whose error is at most \(n\).
Thus the guard contributes less than \(\gamma\) to the expected error. The
construction is therefore exactly private for every positive \(\kappa\);
small \(\kappa\) affects utility, not privacy. This avoids taking a
trace-distance limit, which would generally be insufficient for pure QDP
because pure privacy is sensitive to arbitrarily small support mismatches.
\footnote{
Pure differential privacy~\cite{dwork2006calibrating, DworkKMMN06} is equivalent to a bound on the max-divergence.
It is known that the max-divergence between any two distributions $P$ and $Q$ becomes infinite if the support of the first probability distribution $P$ is not contained within the support of the second probability distribution $Q$~\cite{SV15}.
}

\paragraph{Equal-Gram rigidity and the no-information principle}
The central quantum argument is an equal-Gram rigidity lemma~\cite{CJW04,Cock_2000,klauck2004quantum,CHEFLES200014}. 
This is the equal-Gram form of the no-information principle underlying
Klauck's honest-player quantum privacy model~\cite{klauck2004quantum}.
Suppose a Completely Positive Trace-Preserving (CPTP)
map sends each pure input state \(\ket{u_i}\) to a prescribed pure output state
\(\ket{v_i}\), and suppose
\[
\braket{u_i|u_j}
=
\braket{v_i|v_j}
\neq 0
\]
for every \(i,j\). Let \(W\) be a Stinespring isometry for the map~\cite{Stinespring1955}. Purity of
the prescribed output implies
\[
W\ket{u_i}
=
e^{\mathrm{i}\theta_i}
\ket{v_i}\ket{e_i}
\]
for some environment state \(\ket{e_i}\). Preservation of inner products gives
\[
\braket{u_i|u_j}
=
e^{\mathrm{i}(\theta_j-\theta_i)}
\braket{v_i|v_j}
\braket{e_i|e_j}.
\]
Since the two Gram matrices agree and their entries are nonzero,
\(\braket{e_i|e_j}\) must be a phase of unit magnitude. Hence all environment
states agree up to phase, and their density operators are identical. The
observer's retained complementary state is therefore independent of \(i\).

Applied to the guarded protocol, this means that any Klauck-honest Bob who
returns the prescribed pure state cannot retain information about Alice's
pair \((x,a)\). His complete terminal state is merely an input-independent
quantum channel applied to the ideal view \((y,B,Z)\). A symmetric
postprocessing argument handles an observing Alice. Privacy then follows from
the ideal functionality by data processing.

This is precisely where the quantum advantage comes into play. Classical messages can
be copied while remaining unchanged, so exact preservation of a classical
outgoing message places no comparable restriction on the observer's retained
state. For non-orthogonal quantum states, by contrast, exact preservation of
the entire pure-state family prevents the observer from creating an
input-dependent copy in a complementary register.

\paragraph{Exact calibration for approximate privacy}
For approximate privacy, we avoid generic composition or continuity bounds
and calculate the hockey-stick divergence exactly. On the finite cycle, the
likelihood ratio between neighboring shifts of \(q_\alpha\) takes only three
forms:
\[
1,\qquad e^\alpha,\qquad e^{-\alpha}.
\]
The region on which the ratio exceeds \(e^\varepsilon\) is exactly one
half-cycle. Summing the excess probability mass on that region gives
\[
\Delta_{n,\varepsilon}(\alpha)
=
\left(1-e^{\varepsilon-\alpha}\right)
\frac{1-e^{-2n\alpha}}
     {1+e^{-\alpha}-2e^{-(2n+1)\alpha}}.
\]
We prove that \(\Delta_{n,\varepsilon}(\alpha)\) is continuous and strictly
increasing from \(0\) at \(\alpha=\varepsilon\) to \(1\) as
\(\alpha\rightarrow\infty\). Therefore, for every \(0<\delta<1\), there is a
unique solution
\[
\Delta_{n,\varepsilon}
 \!\left(\alpha_n^\star(\varepsilon,\delta)\right)
=
\delta.
\]
Since \(\alpha_n^\star(\varepsilon,\delta)>\varepsilon\), the corresponding
cyclic-geometric distribution is more concentrated than the pure-DP
distribution. The equal-Gram argument is independent of \(\alpha\), so this
improved ideal functionality lifts to the quantum protocol without consuming
any additional \(\delta\)-budget. This yields the strict approximate-privacy
utility improvement in Theorem~\ref{thm:approx-khnp}.

\subsection{Comparison to Previous Lower Bounds}

We now further 
discuss how previous lower bounds in the literature apply
to our work.

\paragraph{The classical transcript model}
A classical interactive protocol induces a random transcript
\[
T=(M_1,\ldots,M_r)
\]
consisting of all messages exchanged by the parties~\cite{GMR85,GMW87,Beaver-1991-Sm,Beaver-1992-FS}. An honest-but-curious
classical player can copy every message it receives or sends without changing
the execution. Its terminal view may therefore contain the complete
transcript, its private input, and its private random coins. Consequently,
privacy of the player's complete view implies privacy of the classical
transcript and coins~\cite{GMR89}.

For every fixed classical transcript \(t\), the probability of observing
\(t\) factors into an Alice-dependent and a Bob-dependent term:
\[
\Pr[T=t\mid X=x,Y=y]
=
a_t(x)b_t(y).
\]
Equivalently, a transcript induces a combinatorial rectangle in the input
space. Conditional on a transcript, the parties' private random coins are
independent. These factorization and conditional-independence properties are
fundamental to classical communication complexity arguments~\cite{kushilevitz1997communication,rao2020communication}.

McGregor et al.\ exploit this classical structure in two complementary ways~\cite{McGregorMPRTV10,mcgregor2011limits}.
Their first technique relates the likelihood-ratio constraints imposed by
differential privacy to deterministic extraction from Santha--Vazirani
sources~\cite{SanthaV86}. Their second technique connects accurate two-party differentially
private protocols to low-communication approximations of the underlying
function~\cite{BARYOSSEF2004702}. Both arguments reason about classical transcript distributions,
transcript rectangles, and the information that can be retained by copying
the interaction.

\paragraph{Why these arguments do not directly extend to quantum protocols}
A quantum protocol does not, in general, generate a simultaneously available
classical transcript~\cite{buhrman1998quantum}. At round \(j\), the communicated register may be in a
non-orthogonal state
\[
\rho^{x,y}_{M_j},
\]
possibly entangled with private workspaces and with registers communicated in
other rounds~\cite{brassard2001quantum}. There need not exist a classical random variable \(T\) whose
realizations describe the interaction while preserving all of its operational
properties.

In particular, the classical step of copying every message is unavailable~\cite{buhrman1998quantum,brassard2001quantum}.
An isometry of the form
\[
\ket{\psi_i}\ket{0}
\longmapsto
\ket{\psi_i}\ket{\psi_i}
\]
cannot exist for a family of distinct non-orthogonal states. Likewise, a
measurement that records which input-dependent state was sent may disturb the
message and prevent the observer from returning the exact state prescribed by
the protocol. Such a measure-and-record strategy~\cite{li2001quantum} is therefore not
Klauck-honest.

This failure of copyability removes the classical reduction from a private
interactive protocol to a private transcript distribution~\cite{Quantum_copying_96}. The observer's
view is instead a density operator, and privacy is expressed through operator
inequalities or quantum hockey-stick divergences~\cite{hirche2023quantum}. There is no general
pointwise factorization
\[
p_{x,y}(t)=a_t(x)b_t(y)
\]
over classical transcript values, and hence no immediate quantum analogue of
the rectangle partition used in the classical lower bound~\cite{buhrman1998quantum,brassard2001quantum}. 
Measuring all
messages to manufacture a classical transcript is also insufficient: the
measurement is an additional channel that can alter interference, destroy
coherence, and violate the exact outgoing-message requirement defining the
KHNP model~\cite{klauck2004quantum}.

More concretely, our guarded messages are deliberately non-orthogonal~\cite{klauck2004quantum,Klauck07}. If
Alice's encoded input and noise share is indexed by \(i=(x,a)\), then the
first-message states satisfy
\[
\braket{\psi_i|\psi_j}=\kappa>0
\qquad\text{for all }i\neq j.
\]
Bob is required to return a prescribed family
\(\{\ket{\phi_i^{y,b}}\}_i\) with exactly the same Gram matrix. Equal-Gram
rigidity then implies that any complementary state retained by a
Klauck-honest Bob is independent of \(i\)~\cite{CJW04}. Thus the action that is automatic
classically (i.e., keeping a copy of the input-dependent message while forwarding
the prescribed message) is physically impossible for these quantum states.
This is because
physical transformations between prescribed families of pure states
can be characterized through their Gram matrices~\cite{CJW04,Watrous2018}. The obstruction can also be seen directly from a Stinespring dilation~\cite{Stinespring1955,CJW04}, as described earlier.

\paragraph{What the classical lower bounds still imply}
The preceding discussion does not mean that the McGregor et al.~\cite{McGregorMPRTV10,mcgregor2011limits} bounds are
irrelevant to our model. They continue to apply to every \emph{classical}
KHNP protocol. A classical KHNP observer may copy every message, retain its
private randomness, send exactly the prescribed outgoing messages, and
complete the protocol. Hence, privacy against all classical KHNP observers
implies privacy of the transcript-and-coins views considered by McGregor et
al~\cite{McGregorMPRTV10,mcgregor2011limits}. Moreover, when honest executions produce the same output with probability
one, that common output is a function of the classical transcript on its
support. Their lower-bound argument therefore yields the classical side of
our separation:
\[
\Omega(\sqrt n)
\quad\text{under pure privacy},
\]
and
\[
\Omega(\sqrt n/\log n)
\quad\text{under strong approximate privacy}.
\]
Theorems~\ref{thm:pure-khnp} and~\ref{thm:approx-khnp} show that these bounds cannot be extended unchanged from
classical to quantum communication.

\paragraph{Dependence on the adversarial model}
The quantum advantage is specific to the exact message-preserving KHNP
condition. It is not a lower-bound violation in a model where the receiver may
arbitrarily measure, modify, retain, or abort. For example, a receiver can
measure the projector onto the data-dependent support of the first message
and then abort, distinguishing adjacent inputs. That strategy is included in
our retention-robust all-CPTP model but excluded from KHNP because it cannot
subsequently return the prescribed coherent state.

Accordingly, prior classical lower bounds and our quantum upper bounds address
different physical realizations, not inconsistent privacy guarantees~\cite{McGregorMPRTV10,mcgregor2011limits,Haitner22,Haitner25}.
Classically, exact message preservation does not prevent transcript retention,
so the classical lower bounds still apply even under KHNP. In the quantum regime, exact
preservation of non-orthogonal pure messages can force the retained
complementary state to contain no input information. Whether an analogue of
the classical accuracy lower bound holds for quantum protocols against fully
retention-capable adversaries remains open.

\subsection{Organization of the Paper}

The remainder of the paper is organized as follows.
In Section~\ref{sec:related-work}, we review related work on classical two-party differential privacy,
quantum privacy models, and stronger cryptographic notions of two-party
security. In Section~\ref{sec:prelims}, we introduce the three view models considered in this
work (fixed prescribed-channel privacy, Klauck-honest nonpreemptive privacy,
and retention-robust adversarial privacy) and discuss the quantum differential
privacy preliminaries used throughout the paper.

In Section~\ref{sec:ideal}, we develop the ideal common-output functionality for Hamming
distance. We introduce the distributed cyclic-geometric noise mechanism,
prove its pure privacy and accuracy guarantees, and derive the exact
finite-cycle hockey-stick divergence needed for approximate privacy.
The corresponding protocol realizations are given in Section~\ref{sec:khnp}. We first show that
the prescribed-channel model admits an exact classical reversible
implementation, and then present the guarded coherent quantum protocol.
Furthermore, we prove the equal-Gram no-information principle, establish the
pure QDP guarantee of Theorem~\ref{thm:pure-khnp}, derive the sharper approximate QDP guarantee
of Theorem~\ref{thm:approx-khnp}, and compare the resulting accuracy of the quantum protocol with the classical
lower bounds of McGregor et al.~\cite{mcgregor2011limits}.

In Section~\ref{sec:discussion},
we discuss the interpretation of the different
privacy models, the distinction between pure and approximate privacy, and the possibility of verification of quantum states
in the KHNP model. Finally, we conclude in Section~\ref{sec:conclusion}.

\section{Related Work}
\label{sec:related-work}

\subsection{Differential privacy in two-party computation}

In two-party differential privacy, Alice and Bob hold private inputs and
interactively compute a statistic of their joint data~\cite{DworkKMMN06}. Privacy must hold in
both directions: Alice's view should be insensitive to a change in Bob's
input, and Bob's view should be insensitive to a change in Alice's input~\cite{BNO08}.
This notion of privacy
differs from the trusted curator model, in which a single trusted
algorithm sees both inputs and releases only a noisy output. For a
sensitivity-one statistic such as Hamming distance, the trusted curator model
achieves expected error \(O(1/\varepsilon)\), independent of the input
length. In the two-party setting, by contrast, the interaction required to
compute the statistic may itself leak information.

McGregor et al.~\cite{mcgregor2011limits} derive
and discuss strong
information-theoretic limitations for classical two-party differential
privacy. For Hamming distance on \(n\)-bit inputs, their results imply an
\(\Omega(\sqrt n)\) error lower bound for suitable constant pure-privacy
parameters and an \(\Omega(\sqrt n/\log n)\) lower bound under strong
approximate privacy. These bounds separate the two-party setting from the
trusted curator model and from computationally private protocols~\cite{BNO08}. Indeed,
standard secure multiparty computation can emulate the trusted curator under
cryptographic assumptions, allowing the parties to recover
accuracy, in the central model,
against computationally bounded observers.

Subsequent work investigated the computational assumptions needed to cross
this information-theoretic accuracy barrier. Haitner et
al.~\cite{Haitner22} showed that sufficiently accurate
computationally differentially private protocols for Hamming distance or
integer inner product imply key agreement. Thus, improving substantially over
the classical information-theoretic bounds is not merely an algorithmic
improvement within the classical model: it entails nontrivial cryptographic
structure. Several other
related works have further studied the relationship between
computational differential privacy, key agreement, oblivious transfer, and
other public-key primitives~\cite{Haitner22,Haitner25,BNO08}.

Our result identifies quantum communication as a different resource for
crossing the classical accuracy barrier. The protocol of Theorem~\ref{thm:pure-khnp} uses no
computational assumption, trusted setup, prior entanglement, bounded-storage
assumption, or secure-erasure assumption. Nevertheless, under Klauck's
honest, nonpreemptive, message-preserving model, it achieves \(O(1)\)
expected error for constant \(\varepsilon\). This does not contradict the
lower bounds of McGregor et al.~\cite{McGregorMPRTV10,mcgregor2011limits}, because those bounds are lower bounds for
\emph{classical} interactive protocols. Their proofs use structural
properties of classical transcripts that do not hold for coherent quantum
communication~\cite{BCW98,brassard2001quantum,buhrman1998quantum}.

\subsection{Quantum privacy models}

Klauck explicitly separated privacy at every round from privacy only in
the final state of honest nonpreemptive parties~\cite{klauck2004quantum}.  The honest players are
defined by exact preservation of every prescribed outgoing message, not
by a fixed sequence of local channels; he also observes that classical
messages may be copied and that final-state privacy is therefore no
weaker classically \cite{klauck2004quantum}.  Our guarded construction
uses the exact no-information principle used in Klauck's work~\cite{klauck2004quantum}.
Kerenidis et al.\ further showed that privacy in quantum communication
depends on whether inputs are classical, coherent, or externally purified
\cite{kerenidis2016privacy}. In our work, the inputs are
classical but quantum communication is allowed. The
classical differential privacy definition has been extended to
allow for more general measurements~\cite{hirche2023quantum}. 
We rely on such quantum differential privacy definitions (which
include the classical differential privacy definition as a special case) in our work.

We note that analyses for stronger adversaries exist in the literature~\cite{KL14}. Such adversaries
may retain auxiliary work while remaining locally consistent with an
honest execution \cite{dupuis2010specious}; even a private SWAP cannot be
implemented in the bare model in that framework~\cite{dupuis2010specious}.  Salvail, Schaffner, and
Sotakova study honest-but-curious quantum parties that purify their actions
and postpone measurement \cite{salvail2015leakage}.  General impossibility
and leakage results for two-party quantum computation imply that
message-preserving honesty \textit{cannot}, without additional
extensions, lead to malicious or simulation-based security
\cite{buhrman2012insecurity,osborn2022constant}.  The retention barrier we discuss in our work
is consistent with these observations.

\section{Models and QDP Preliminaries}
\label{sec:prelims}

Write \(x\sim x'\) when two strings differ in one coordinate.  For density
operators, \(A\preceq B\) denotes the Loewner order and
\[
  E_\lambda(\rho\Vert\sigma)=\Tr(\rho-\lambda\sigma)_+
\]
is the quantum hockey-stick divergence~\cite{Sharma_2013}.

Motivated by the definition of classical differential privacy for
two-party protocols (Definition 2.3 in~\cite{mcgregor2011limits}),
we define quantum differential privacy for two-party protocols:

\begin{definition}[QDP for a family of views]
\label{def:qdp}
A family \(\{\rho^B_{x,y}\}\) is pure \(\eps\)-QDP for Alice's input if,
for every \(x\sim x'\) and fixed \(y\),
\[
 \rho^B_{x,y}\preceq e^\eps\rho^B_{x',y},
 \qquad
 \rho^B_{x',y}\preceq e^\eps\rho^B_{x,y}.
\]
It is \((\eps,\delta)\)-QDP if the corresponding two hockey-stick
divergences with \(\lambda=e^\eps\) are at most \(\delta\).  The symmetric
conditions protect Bob's input in Alice's view.
\end{definition}

These inequalities are equivalent to the usual bounds for every quantum
measurement \(0\preceq M\preceq I\)~\cite{ZhouY17,hirche2023quantum}.  They are preserved by quantum channels,
partial trace, and tensoring input-independent side information.

\begin{definition}[Fixed prescribed-channel view]
\label{def:pc}
In the \emph{prescribed-channel} (PC) view model, the parties execute the
fixed local channels specified by the protocol, including their prescribed
discard operations.  A party's view is the collection of registers
designated by those channels at termination.
\end{definition}

\begin{definition}[Klauck-honest nonpreemptive view]
\label{def:khnp}
Fix a quantum protocol \(\Pi\).  A strategy is \emph{Klauck-honest} if,
for every private input and every sequence of pure message states it could
have received, the density operator of its next outgoing message is
exactly the one prescribed by \(\Pi\).  Subject to this constraint it may
perform arbitrary local operations and retain arbitrary private systems.
It is \emph{nonpreemptive} if it completes every round.  Its view is its
complete terminal private state.  Protocol \(\Pi\) is KHNP-QDP if
Definition~\ref{def:qdp} holds for the terminal views of every such
strategy for the observing party.
\end{definition}

This is Klauck's definition \cite[Sec.~2.3]{klauck2004quantum}.  Copying a
prescribed classical message is allowed.  By contrast, his no-information
lemma shows that no unitary can extract a distinguishing record while
leaving each of two non-orthogonal pure messages unchanged.

The pure KHNP result below uses classical input registers and classical
private random tapes.  The parties may keep their own input and sampled
noise, and honesty is imposed conditional on every realization of that
tape.  The input is not supplied in coherent superposition or entangled
with an external reference.  Subject to these restrictions, the observer
may still use arbitrary quantum private work allowed by
Definition~\ref{def:khnp}.

\begin{definition}[Retention-robust adversarial QDP]
\label{def:rr}
For corrupted Bob, let \(\mathfrak A_B^{\rm all}\) contain all
finite-dimensional interactive CPTP strategies, including strategies
that change messages, retain auxiliary systems, or abort.  A strategy may
start with an arbitrary side register \(R_B\) in state \(\sigma^y_{R_B}\)
that may depend on Bob's fixed input \(y\), but is identical for adjacent
choices \(x,x'\) of Alice's protected input and is independent of Alice's
fresh private coins.  The strategy itself is independent of \(x\) except
through received messages.  Let
\(\rho^{B^*,\sigma}_{x,y}\) be its complete state when it terminates.
RR-QDP for Alice means that Definition~\ref{def:qdp} holds for every
\(B^*\in\mathfrak A_B^{\rm all}\), every \(y,\sigma^y\), and every
\(x\sim x'\).  The symmetric quantifiers, with side information allowed
to depend on \(x\) but identical for adjacent \(y,y'\), protect Bob.
Accuracy and common output are required only in the all-honest execution.
\end{definition}

Allowing initial side information that already changes with the protected
input would make privacy impossible independently of the protocol.
Definition~\ref{def:rr} therefore fixes the same adversarial side state in
the two adjacent experiments.

\section{The Ideal Common-Output Functionality}
\label{sec:ideal}

Fix the odd modulus \(m=4n+1\).  For an integer \(u\), let
\([u]_m\in\{0,\ldots,m-1\}\) be its least nonnegative residue.  The
circular metric on \(\Z_m\) is
\begin{equation}
 d_m(a,b):=\min\{[a-b]_m,[b-a]_m\}
 =\min_{k\in\Z}|\widetilde a-\widetilde b+km|,
\label{eq:circular}
\end{equation}
where \(\widetilde a,\widetilde b\) are arbitrary integer lifts.
For \(\alpha>0\), define the full-support cyclic-geometric distribution
\[
 q_\alpha(g)=
 \frac{\exp[-\alpha d_m(g,0)]}
 {\sum_{h\in\Z_m}\exp[-\alpha d_m(h,0)]}.
\]
Circular distance is \(1\)-Lipschitz, so
\[
 e^{-\alpha}\le\frac{q_\alpha(g+1)}{q_\alpha(g)}
 \le e^\alpha.
\]
Alice samples \(A\sim q_\alpha\), Bob independently samples
\(B\sim q_\alpha\), and the ideal functionality releases
\[
 Z=d+A+B\pmod m.
\]
Let \(D=\{0,\ldots,n\}\subset\Z_m\), and define
\begin{equation}
 \operatorname{Dec}(z):=
 \min\mathop{\arg\min}_{r\in D}d_m(z,r),
\label{eq:decoder}
\end{equation}
so every tie is broken toward the smaller integer.  Both parties receive
the same \(Z\) and output the common estimate
\(\widehat d=\operatorname{Dec}(Z)\).  Alice's ideal view is
\((x,A,Z)\), while Bob's is \((y,B,Z)\).

\begin{lemma}[Circular interval decoding]
\label{lem:decode}
For every \(d\in D\) and \(z\in\Z_m\),
\[
 |\operatorname{Dec}(z)-d|\le d_m(z,d).
\]
\end{lemma}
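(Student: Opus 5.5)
Write $r^\star=\operatorname{Dec}(z)\in D$ and $\ell=d_m(z,d)$. Since $d\in D$, the definition of the decoder gives $d_m(z,r^\star)\le d_m(z,d)=\ell$. By the triangle inequality for the circular metric,
\[
d_m(r^\star,d)\le d_m(r^\star,z)+d_m(z,d)\le 2\ell .
\]
This alone is too weak, since we need $|r^\star-d|\le\ell$ rather than $2\ell$. The plan is therefore to work with integer lifts and use the specific geometry of $m=4n+1$.

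First I handle the range $\ell\le 2n$, which covers every possible value because $d_m\le (m-1)/2=2n$. Choose an integer lift $\tilde z$ of $z$ with $|\tilde z-d|=\ell$, so that $\tilde z\in[d-\ell,d+\ell]\subseteq[-2n,3n]$. I then split into three cases according to where $\tilde z$ lies relative to the interval $[0,n]$.

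\textbf{Case $\tilde z\in[0,n]$.} Here $z$ is itself in $D$, so $\operatorname{Dec}(z)=z$ and $|z-d|=\ell$.

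\textbf{Case $\tilde z>n$.} Then $\tilde z\in(n,3n]$, and I claim $\operatorname{Dec}(z)=n$. For $r\in D$, the direct distance is $\tilde z-r$, which is minimized at $r=n$ with value $\tilde z-n$. The wraparound distance is $m-(\tilde z-r)\ge 4n+1-3n=n+1$, and this must be compared with $\tilde z-n$. Since $\tilde z-n\le 2n$, the comparison is not immediate, and I will need to check $\min_r d_m(z,r)$ more carefully. This is the step I expect to be the main obstacle.

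The way around it: even if the decoder chooses $r=0$ through the wraparound, I still have $d_m(z,r^\star)\le\ell$. Both $r^\star$ and $d$ lie in $[0,n]$, so their difference is at most $n$. I will instead aim for $|r^\star-d|\le \ell$ via monotonicity. Lift $z$ to $\tilde z'$, the lift closest to $r^\star$. Then $|\tilde z'-r^\star|\le\ell$ and $|\tilde z-d|=\ell$ for the lift $\tilde z$ closest to $d$.

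If $\tilde z'=\tilde z$, the nearest point of $[0,n]$ to $\tilde z$ is the projection $\pi(\tilde z)$. This is $1$-Lipschitz and fixes $d$, so $|\pi(\tilde z)-d|\le|\tilde z-d|=\ell$, and the tie-breaking is irrelevant here. If instead $\tilde z'=\tilde z\pm m$, then $|\tilde z'-r^\star|\ge m-|\tilde z-r^\star|$, and also $|\tilde z-r^\star|\le\ell+n\le 3n$. Together these give $|\tilde z'-r^\star|\ge n+1$. A gap of $n+1$ does not by itself exclude this case when $\ell\ge n+1$. However, when $\ell\ge n$, the trivial bound $|r^\star-d|\le n\le\ell$ already closes the case.

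So the final structure of the proof is as follows. If $\ell\ge n$, the bound holds trivially because $D$ has diameter $n$. If $\ell<n$, then $\tilde z\in(-n,2n)$, and every other lift lies at distance at least $m-2n>2n$ from $D$. Meanwhile $\tilde z$ is within $n$ of $D$ (distance at most $\ell<n$). Hence the nearest point is the projection $\pi(\tilde z)$, and the $1$-Lipschitz property of projection onto an interval gives $|\operatorname{Dec}(z)-d|\le\ell$. Ties cannot arise between lifts in this regime, and within a single lift the projection is unique.
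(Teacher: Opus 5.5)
Your final argument is correct, and it takes a different route from the paper's. The paper handles $z\in D$ directly. For $z\notin D$ it parametrizes $z$ by its distance $t$ from $n$ along the complementary arc of length $L=3n+1$ and writes $d_m(z,r)=\min\{t+n-r,\,L-t+r\}$. It then computes the decoder explicitly ($n$ for $t<L/2$, and $0$ for $t\ge L/2$ by the tie-breaking rule) and checks both path lengths against $|\operatorname{Dec}(z)-d|$ in each case.

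You instead split on $\ell=d_m(z,d)$. For $\ell\ge n$ the bound is free because $D$ has diameter $n$. This regime contains every tie, since at $t=L/2$ one has $d_m(z,d)\ge(3n+1)/2>n$, and every case where the decoder is governed by a lift of $z$ other than the one nearest $d$. For $\ell<n$, the lift $\tilde z$ nearest $d$ lies in $(-n,2n)$, and all other lifts are more than $2n$ from $[0,n]$. So $\operatorname{Dec}(z)$ is the unique nearest point of $\{0,\dots,n\}$ to $\tilde z$, and nonexpansiveness of projection onto an interval finishes the proof.

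Your route is shorter and more conceptual. It shows the tie-breaking convention is irrelevant to the lemma and isolates exactly what the modulus is used for. The paper's route costs a bit more case-checking but gives an explicit description of $\operatorname{Dec}$ on the whole complementary arc.

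When you write it up, delete the exploratory middle paragraphs. In particular, the tentative claim that $\operatorname{Dec}(z)=n$ whenever $\tilde z>n$ is false in general. Take $n=2$, $m=9$, $d=2$, $z=6$: the lift nearest $d$ is $\tilde z=6>n$, yet $d_m(6,0)=3<4=d_m(6,2)$, so $\operatorname{Dec}(6)=0$. Your final argument rightly never uses that claim.
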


\begin{proof}
If \(z\in D\), then \(\operatorname{Dec}(z)=z\), and
\(|z-d|=d_m(z,d)\) because \(|z-d|\le n<m/2\).

Suppose \(z\notin D\).  The complementary arc from \(n\) to \(0\) that
avoids the interior of \(D\) has length
\(L=m-n=3n+1\).  Let \(t\in(0,L)\) be the distance along this arc from
\(n\) to \(z\).  For every \(r\in D\), the two circular paths from \(z\)
to \(r\) have lengths
\[
 t+n-r\quad\text{and}\quad L-t+r,
\]
and hence
\[
 d_m(z,r)=\min\{t+n-r,L-t+r\}.
\]
Minimizing over \(r\in D\) shows that the decoder selects \(n\) when
\(t<L/2\), selects \(0\) when \(t>L/2\), and selects \(0\) at equality
by \eqref{eq:decoder}.

If \(\operatorname{Dec}(z)=n\), then
\[
 t+n-d\ge n-d,\qquad
 L-t+d>L/2+d\ge n-d.
\]
If \(\operatorname{Dec}(z)=0\), then
\[
 L-t+d\ge d,\qquad
 t+n-d\ge L/2+n-d\ge d,
\]
where the last inequality uses \(d\le n\) and
\(L/2=(3n+1)/2\).  Taking the minimum of the two path lengths proves the
result in both cases.
\end{proof}

We provide the proof of privacy and accuracy for the use of
geometric noise to satisfy privacy. Our work extends the
analysis of Ghosh et al.~\cite{GRS09} to circular
interval decoding.

\begin{theorem}[Ideal privacy and accuracy]
\label{thm:ideal}
The ideal views are pure \(\eps\)-QDP and
\[
  \mathbb E|\widehat d-d|
  \le \frac{2}{\sinh\eps}\le\frac2\eps.
\]
\end{theorem}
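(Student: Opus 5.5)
The proof splits into privacy and accuracy, with \(\alpha=\eps\).

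\emph{Privacy.} Consider Bob's ideal view \((y,B,Z)\) and protected adjacent inputs \(x\sim x'\). All views are classical, so they are commuting diagonal density operators, and the Loewner condition of Definition~\ref{def:qdp} reduces to a pointwise likelihood-ratio bound.

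Fix \(y\) and a value \(b\) of Bob's share. Then \(Z=d+b+A\) with \(d=\HD(x,y)\), and
\[
\Pr[B=b,Z=z\mid x]=q_\eps(b)\,q_\eps(z-d-b).
\]
Passing to \(x'\) replaces \(d\) by \(d'\) with \(|d-d'|\le1\). The one-step bound \(e^{-\eps}\le q_\eps(g+1)/q_\eps(g)\le e^{\eps}\), which follows from the 1-Lipschitz property of \(d_m\), therefore gives a ratio in \([e^{-\eps},e^{\eps}]\). This yields \(\rho^B_{x,y}\preceq e^\eps\rho^B_{x',y}\) and the reverse inequality. The input register \(y\) is fixed and identical in both experiments, so it is input-independent side information.

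Alice's view is handled symmetrically, with the roles of \(A\) and \(B\) exchanged. Full support of \(q_\eps\) ensures the ratio is always defined.

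\emph{Accuracy.} By Lemma~\ref{lem:decode}, \(|\Dec(Z)-d|\le d_m(Z,d)=d_m(A+B,0)\le d_m(A,0)+d_m(B,0)\), using the triangle inequality and translation invariance of the circular metric. It then suffices to show \(\mathbb E\,d_m(A,0)\le 1/\sinh\eps\) for \(A\sim q_\eps\) on \(\Z_m\) with \(m=4n+1=2K+1\). The values of \(d_m\) are \(0\), and then \(1,\dots,K\), each attained twice.

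Write \(r=e^{-\eps}\). Then
\[
\mathbb E\,d_m(A,0)=\frac{2\sum_{k=1}^K k r^k}{1+2\sum_{k=1}^K r^k}.
\]
This is the main step; the rest is routine. The truncated expectation is at most the two-sided geometric expectation on \(\Z\), namely
\[
\frac{2r/(1-r)^2}{(1+r)/(1-r)}=\frac{2r}{1-r^2}=\frac{1}{\sinh\eps}.
\]
To prove the truncation claim, note that the infinite two-sided geometric law conditioned on \(|k|\le K\) is exactly \(q_\eps\). Conditioning a distribution with nonincreasing weights in \(|k|\) on \(\{|k|\le K\}\) cannot increase \(\mathbb E|k|\). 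Concretely, the values with \(|k|>K\) all exceed every value with \(|k|\le K\), so removing that tail lowers the mean.

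Summing the two shares gives \(\mathbb E|\widehat d-d|\le 2/\sinh\eps\). Finally, \(\sinh\eps\ge\eps\) gives \(2/\sinh\eps\le 2/\eps\).
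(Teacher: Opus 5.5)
Your proposal is correct and follows essentially the same route as the paper: a pointwise likelihood-ratio bound on the classical ideal views from the one-step ratio of $q_\eps$, followed by Lemma~\ref{lem:decode}, the triangle inequality, and the observation that truncating the two-sided geometric magnitude to $\{0,\dots,2n\}$ cannot increase its mean $1/\sinh\eps$. Your version is only more explicit in three places: the reduction of the Loewner order to pointwise ratios for diagonal states, the exact finite expectation formula, and the convex-combination justification of the truncation step.
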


\begin{proof}
For Alice,
\[
 \Pr[A=a,Z=z\mid x,y]
 =q_\eps(a)q_\eps(z-d-a).
\]
Changing one bit of \(y\) changes \(d\) by at most one.  The likelihood
ratio is therefore between \(e^{-\eps}\) and \(e^\eps\).  Exchanging
the parties proves Bob's condition.

Set \(E=A+B\in\Z_m\).  Lemma~\ref{lem:decode}, translation invariance of
\eqref{eq:circular}, and the triangle inequality give
\[
 |\widehat d-d|
 \le d_m(d+E,d)
 =d_m(E,0)
 \le d_m(A,0)+d_m(B,0).
\]

The random variable \(d_m(A,0)\) follows the distribution of
the absolute value of a two-sided geometric random variable, except that it cannot exceed \(2n\). Removing values larger than $2n$ can only decrease its expectation.  
That is, conditioning
on this lower-tail event cannot increase its expectation.  Consequently,
\[
 \mathbb E d_m(A,0)
 \le\frac{2e^{-\eps}}{1-e^{-2\eps}}
 =\frac1{\sinh\eps},
\]
and similarly for \(B\).
\end{proof}

\begin{proposition}[Exact finite-cycle approximate loss]
\label{prop:calibration}
For \(\alpha\ge\eps>0\), define
\begin{equation}
\Delta_{n,\eps}(\alpha)
:=(1-e^{\eps-\alpha})
\frac{1-e^{-2n\alpha}}
 {1+e^{-\alpha}-2e^{-(2n+1)\alpha}}
\label{eq:delta-cal}
\end{equation}
and
\begin{equation}
U_n(\alpha):=
\frac{4\sum_{j=1}^{2n}j e^{-\alpha j}}
 {1+2\sum_{j=1}^{2n}e^{-\alpha j}}.
\label{eq:finite-utility}
\end{equation}
The ideal views are exactly
\((\eps,\Delta_{n,\eps}(\alpha))\)-QDP and satisfy
\[
 \mathbb E|\widehat d-d|
 \le U_n(\alpha)\le\frac{2}{\sinh\alpha}.
\]
Moreover, \(\Delta_{n,\eps}\) is continuous and strictly increasing from
\(0\) at \(\alpha=\eps\) to \(1\) as \(\alpha\to\infty\).  Thus every
\(0<\delta<1\) determines a unique
\(\alpha_n^\star(\eps,\delta)>\eps\) satisfying
\[
 \Delta_{n,\eps}(\alpha_n^\star)=\delta.
\]
\end{proposition}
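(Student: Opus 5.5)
The plan is to reduce both directions to a single classical shift divergence of \(q_\alpha\), compute that divergence in closed form, and then handle utility and monotonicity separately. By the symmetry between the parties it suffices to treat Bob's ideal view \((y,B,Z)\) when Alice's input changes from \(x\) to \(x'\sim x\).

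\textbf{Reduction.} All ideal views are diagonal in a classical basis. The quantum hockey-stick divergence \(E_{e^\eps}\) therefore equals the classical one, \(\sum_v(P(v)-e^\eps Q(v))_+\). Conditional on \(B=b\), we have \(Z=d+b+A\) with \(A\sim q_\alpha\), and \(B\) has the same law under \(x\) and \(x'\). Hence
\[
E_{e^\eps}(\rho_{x,y}\Vert\rho_{x',y})=E_{e^\eps}\bigl(q_\alpha(\cdot)\Vert q_\alpha(\cdot+s)\bigr),
\]
where \(s=d'-d\in\{-1,0,1\}\). The case \(s=0\) gives divergence \(0\). The reflection \(g\mapsto -g\) preserves \(q_\alpha\) and exchanges the cases \(s=1\) and \(s=-1\). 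It therefore suffices to compute the single quantity
\[
\sum_g\bigl(q_\alpha(g)-e^\eps q_\alpha(g-1)\bigr)_+ ,
\]
and to check that this value is attained. Attainment is what makes the guarantee exact rather than merely an upper bound.

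\textbf{Exact computation.} The ratio \(q_\alpha(g)/q_\alpha(g-1)\) equals \(e^{-\alpha[d_m(g,0)-d_m(g-1,0)]}\). It takes three values:
\begin{itemize}
\item \(e^{-\alpha}\) for \(g\in\{1,\dots,2n\}\);
\item \(1\) at the antipodal step \(g=2n+1\);
\item \(e^{\alpha}\) on the half-cycle \(g\in\{2n+2,\dots,4n,0\}\), i.e.\ \(g=-j\) for \(j=0,\dots,2n-1\).
\end{itemize}
Because \(\alpha\ge\eps\), only this last half-cycle contributes. There \(q_\alpha(g-1)=e^{-\alpha}q_\alpha(g)\), so each term equals \((1-e^{\eps-\alpha})q_\alpha(g)\). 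Write \(w_j=e^{-\alpha j}\), \(S=\sum_{j=0}^{2n-1}w_j\) and \(Z_\alpha=1+2\sum_{j=1}^{2n}w_j\). The divergence is then \((1-e^{\eps-\alpha})S/Z_\alpha\). Summing the geometric series and multiplying numerator and denominator by \(1-e^{-\alpha}\) gives exactly \eqref{eq:delta-cal}.

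\textbf{Utility.} By Lemma~\ref{lem:decode} and the triangle inequality, as in the proof of Theorem~\ref{thm:ideal}, \(\mathbb E|\widehat d-d|\le 2\,\mathbb E\,d_m(A,0)\). Writing out \(\mathbb E\,d_m(A,0)\) under \(q_\alpha\) gives exactly \(U_n(\alpha)\). The bound \(U_n(\alpha)\le 2/\sinh\alpha\) follows from the truncation argument already used in Theorem~\ref{thm:ideal}: a two-sided geometric variable conditioned on \(|G|\le 2n\) has no larger mean, and the untruncated mean is \(2e^{-\alpha}/(1-e^{-2\alpha})=1/\sinh\alpha\).

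\textbf{Monotonicity.} This is the step I expect to be the main obstacle, because the closed form \eqref{eq:delta-cal} is not obviously monotone. I would instead write \(\Delta_{n,\eps}(\alpha)=(1-e^{\eps-\alpha})R(\alpha)\) with \(R=S/Z_\alpha\), and split \(Z_\alpha\) into the mass \(S\) on the contributing half-cycle plus the complementary mass \(T'=\sum_{j=1}^{2n}w_j+w_{2n}\). Since \(\sum_{j=1}^{2n}w_j=e^{-\alpha}S\), we get
\[
\frac1R=1+e^{-\alpha}+\frac{1}{\sum_{j=0}^{2n-1}e^{\alpha(2n-j)}} .
\]
Each term on the right is strictly decreasing in \(\alpha\), so \(R\) is positive and strictly increasing, with \(R\to1\) as \(\alpha\to\infty\). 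The factor \(1-e^{\eps-\alpha}\) is nonnegative, strictly increasing, vanishes at \(\alpha=\eps\), and tends to \(1\). Hence \(\Delta_{n,\eps}\) is continuous and strictly increasing from \(0\) to \(1\). The intermediate value theorem then gives, for each \(0<\delta<1\), a unique \(\alpha_n^\star>\eps\) with \(\Delta_{n,\eps}(\alpha_n^\star)=\delta\).
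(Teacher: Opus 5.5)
Your proof is correct. The reduction to a single shift divergence, the three-case ratio analysis, the half-cycle summation and the utility bound all match the paper. Your reflection $g\mapsto -g$ is a cleaner justification of the paper's bare remark that the reverse divergence has the same value.

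The genuine difference is the monotonicity step. The paper writes the finite-cycle factor as $W_L(r)=A_L(r)/(1+2rA_L(r))$ with $A_L(r)=\sum_{j=0}^{L-1}r^j$. It differentiates in $r=e^{-\alpha}$ and gets $W_L'<0$ from the coefficientwise comparison $A_L'\le A_L^2$. You instead split the normalizer into the contributing half-cycle mass $S$ and its complement, which gives the identity $1/R=1+e^{-\alpha}+\bigl(\sum_{j=0}^{2n-1}e^{\alpha(2n-j)}\bigr)^{-1}$. Positivity, strict monotonicity and the limit $R\to1$ can then be read off with no calculus. Your route is more elementary and has a side benefit. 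Because the last term is positive, it immediately gives $R<(1+e^{-\alpha})^{-1}$, which is exactly the inequality the paper later invokes, without derivation, to justify the explicit choice $\bar\alpha(\eps,\delta)$.

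Two small corrections:
\begin{enumerate}
\item Writing out $\mathbb E\,d_m(A,0)$ gives $U_n(\alpha)/2$, not $U_n(\alpha)$. Your per-share comparison with $1/\sinh\alpha$ shows this is what you meant, but as written the bound would double.
\item The case $s=0$ never occurs. Flipping one bit of $x$ always changes $\HD(x,y)$ by exactly one, so every adjacent pair has $s=\pm1$. This is also what settles the attainment (``exactly'') claim that you flagged but did not verify.
\end{enumerate}
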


\begin{proof}
Represent \(\Z_m\) by \(\{-L,\ldots,L\}\), where \(L=2n\), put
\(r=e^{-\alpha}\), and write
\[
 q_\alpha(g)=\frac{r^{|g|}}{1+2\sum_{j=1}^{L}r^j}.
\]
For adjacent distances \(d,d+1\), Alice's probability masses are
\[
 P_d(a,z)=q_\alpha(a)q_\alpha(g),\qquad
 P_{d+1}(a,z)=q_\alpha(a)q_\alpha(g-1),
\]
where \(g=z-d-a\) and subtraction is cyclic.  The relevant ratios are
\[
\frac{q_\alpha(g)}{q_\alpha(g-1)}
=
\begin{cases}
1,&g=-L,\\
e^\alpha,&-L+1\le g\le0,\\
e^{-\alpha},&1\le g\le L.
\end{cases}
\]
Consequently,
\begin{align*}
E_{e^\eps}(P_d\Vert P_{d+1})
&=(1-e^{\eps-\alpha})
  \frac{\sum_{j=0}^{L-1}r^j}
       {1+2\sum_{j=1}^{L}r^j}\\
&=\Delta_{n,\eps}(\alpha).
\end{align*}
The reverse divergence has the same value, supported on the opposite
half-cycle.  Averaging over the observer's own share and exchanging the
parties proves the QDP statement.

Lemma~\ref{lem:decode} and the triangle inequality bound error by the sum
of the two share magnitudes.  Evaluating their finite expectations gives
\eqref{eq:finite-utility}; comparison with the untruncated two-sided
geometric law gives \(U_n(\alpha)\le2/\sinh\alpha\).

For monotonicity, let
\[
A_L(r)=\sum_{j=0}^{L-1}r^j,\qquad
W_L(r)=\frac{A_L(r)}{1+2rA_L(r)}.
\]
Then
\[
W_L'(r)=
\frac{A_L'(r)-2A_L(r)^2}{(1+2rA_L(r))^2}<0,
\]
because \(A_L'(r)\le A_L(r)^2\) coefficientwise for \(0<r<1\).
Both \(W_L(e^{-\alpha})\) and \(1-e^{\eps-\alpha}\) therefore increase
with \(\alpha\), strictly for the latter.  Formula
\eqref{eq:delta-cal} gives the endpoint limits, proving existence and
uniqueness.
\end{proof}

The other party's unknown share is essential: a party that sampled all
the noise could subtract it from its own view.

\begin{remark}[Why the factor two is present]
The main problem requires a common output, so both independent shares
enter the same released \(Z\), giving \(2/\sinh\alpha\).  Without a
common-output requirement, the ideal functionality could instead give
Alice \(d+B\) and Bob \(d+A\) modulo \(m\).  Each party would then decode
using only the other party's share and obtain expected error at most
\(1/\sinh\alpha\).  We do not use that separate-output view model.
\end{remark}

\section{Exact PC and Guarded KHNP Realizations}
\label{sec:khnp}

\begin{proposition}[Exact classical realization in the PC view model]
\label{prop:pc}
In the PC view model of Definition~\ref{def:pc}, there is an exact
classical reversible protocol with \(O(n)\) communication whose common
terminal output and views equal the ideal functionality of
Theorem~\ref{thm:ideal}.  It is therefore pure \(\eps\)-QDP and has
expected error at most \(2/\sinh\eps\).
\end{proposition}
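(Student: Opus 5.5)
The plan is to exhibit a concrete classical reversible protocol whose \emph{prescribed} local channels, including their discard operations, leave each party holding exactly its ideal view. Privacy and accuracy are then inherited verbatim from Theorem~\ref{thm:ideal}. The key observation is that in the PC model (Definition~\ref{def:pc}) a party's view is only the collection of registers the prescribed channels designate at termination. Any message register the protocol itself uncomputes or hands back is therefore not part of the view.

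The protocol I would take is a round trip that mirrors the guarded protocol without the guard.
\begin{enumerate}
\item Alice samples \(A\sim q_\eps\) and sends the classical register \((x,A)\), which costs \(n+O(\log n)\) bits.
\item Bob samples \(B\sim q_\eps\). He applies the reversible map \((x,A,0)\mapsto(x,A,\HD(x,y)+A+B \bmod m)\), which is a permutation of basis states for fixed \((y,B)\), and returns all three registers to Alice. His prescribed channel retains only \((y,B)\); the outgoing registers leave his lab and are not in his view.
\item Alice, knowing \(A\), checks that the returned \((x,A)\) agree with what she sent. She keeps \(Z\), prescribedly discards (erases) the returned copy of \(x,A\), which is a fixed function of her own data, and sends \(Z\) to Bob, at a cost of \(O(\log n)\) bits.
\item Both output \(\Dec(Z)\).
\end{enumerate}
The total communication is \(2n+O(\log n)\).

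Next I would verify the views. Bob's terminal registers are \((y,B,Z)\), exactly his ideal view. Alice's are \((x,A,Z)\) together with nothing else, since the returned \(x,A\) are redundant with, and erased against, her own copies. Since \(Z=d+A+B\) with \(A,B\) independent of \(q_\eps\), the joint distribution of each view coincides with the ideal one in Theorem~\ref{thm:ideal}. Pure \(\eps\)-QDP therefore follows directly; it also follows by the data-processing closure noted after Definition~\ref{def:qdp}, since the views are equal rather than merely processed. The common output \(\Dec(Z)\) has error at most \(2/\sinh\eps\) by the same theorem and Lemma~\ref{lem:decode}.

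The only delicate step is justifying that Bob's view excludes \(x\), which he clearly saw in transit. I will argue this purely from Definition~\ref{def:pc}. The prescribed channel for Bob is a reversible operation on the incoming message followed by transmitting it away, with no copy register designated as retained. In the PC model the view is exactly the designated registers, so the transient access is not recorded. I expect this definitional point to be the main thing to state carefully; it is also precisely why the PC model cannot support a quantum--classical separation, since a classical KHNP Bob could copy \(x\).
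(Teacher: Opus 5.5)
Your proposal is correct and is essentially the paper's argument. It uses a computational-basis round trip: $(x,A)$ travels to Bob, $Z=\HD(x,y)+A+B \bmod m$ is computed reversibly and placed in both parties' output registers, and the message copy of $(x,A)$ is returned to Alice and cleared against her own register. The PC-designated terminal views are then exactly $(x,A,Z)$ and $(y,B,Z)$, and Theorem~\ref{thm:ideal} does the rest.

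The only difference from Algorithm~\ref{alg:pc-reversible} is routing. You append $Z$ to the returned message (the $\kappa=0$ version of the guarded protocol) and have Alice forward it. The paper instead has Bob copy $Z$ into his own output and send it separately before uncomputing. This changes nothing in the argument. One slip to fix: ``independent of $q_\eps$'' should read ``independent draws from $q_\eps$''.
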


\begin{proof}
Condition on the sampled shares \(a,b\).  Reversibly run a deterministic
computational-basis protocol for
\[
 F_m((x,a),(y,b))=\HD(x,y)+a+b\pmod m.
\]
Copy its value \(Z\) into a permanent output register, copy the same
classical \(Z\) into the other party's output register, and reverse the
computation to clear the message history and work registers.  The fixed
channels of the PC view model perform this reversal and retain no
additional transcript.  The terminal views are exactly \((x,A,Z)\) and
\((y,B,Z)\), and both parties apply the same decoder 
\eqref{eq:decoder}.  Theorem~\ref{thm:ideal} completes the proof.
\end{proof}

Thus PC security alone cannot demonstrate a quantum improvement: it
already permits an exact classical pure-DP realization.

We proceed to the KHNP model where we introduce Algorithm~\ref{alg:khnp-pure}.

\subsection{Guarded coherent round trip}

Set \(\ell=\lceil\log_2m\rceil\), choose a fixed fallback
\(z_\bot\in\Z_m\), and let
\[
 \kappa=\frac{\gamma}{n+\gamma}\in(0,1).
\]
For \(i=(x,a)\), Alice prepares and sends
\begin{equation}
 \ket{\psi_i}
 =\sqrt{1-\kappa}\ket{0}\ket{i}
  +\sqrt{\kappa}\ket{1}\ket{\bot}.
\label{eq:guard-in}
\end{equation}
For fixed \(y,b\), define
\[
 f_{y,b}(x,a)=\HD(x,y)+a+b\pmod m.
\]
Bob coherently appends an output register and returns the entire message:
\begin{equation}
 \ket{\phi_i^{\,y,b}}
 =\sqrt{1-\kappa}\ket{0}\ket{i}\ket{f_{y,b}(i)}
  +\sqrt{\kappa}\ket{1}\ket{\bot}\ket{z_\bot}.
\label{eq:guard-out}
\end{equation}
Alice measures the branch and residue registers, sends the measured
residue to Bob, and both apply \(\operatorname{Dec}\).

\subsection{Privacy}

\begin{theorem}[Pure QDP for Klauck-honest players]
\label{thm:pure-khnp}
For every \(n\ge1\), \(\eps>0\), and \(\gamma>0\), the guarded protocol
with \(A,B\sim q_\eps\) is information-theoretic pure \(\eps\)-QDP for
the complete terminal view of every Klauck-honest nonpreemptive observing
strategy.  In an honest execution both parties produce a common output,
\[
 \mathbb E|\widehat d-\HD(x,y)|
 <\frac{2}{\sinh\eps}+\gamma,
\]
and total communication is \(2n+O(\log n)=O(n)\) qubits and bits in the
standard unitary communication model.
\end{theorem}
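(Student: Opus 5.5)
The proof splits into utility and communication, privacy against an observing Bob, and privacy against an observing Alice. The Bob case is where I expect the real difficulty.

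\emph{Utility and communication.} In an honest execution, Alice measures the branch register of \eqref{eq:guard-out}. With probability \(1-\kappa\) she obtains the data branch, and the residue equals \(f_{y,b}(x,a)=d+a+b\), which is exactly the ideal release \(Z\). By Theorem~\ref{thm:ideal}, this branch contributes conditional expected error at most \(2/\sinh\eps\). With probability \(\kappa\) she obtains \(z_\bot\), and since \(\Dec(z_\bot),d\in D\), the error there is at most \(n\). The total is therefore at most \((1-\kappa)\,2/\sinh\eps+\kappa n\). With \(\kappa=\gamma/(n+\gamma)\) we have \(\kappa n=\gamma n/(n+\gamma)<\gamma\), which gives the strict bound.

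Both parties apply \(\Dec\) to the same announced residue, so the output is common. For communication, the first message carries \(1+n+\ell\) qubits, where \(\ell=\lceil\log_2 m\rceil\) bounds the register for \(a\). The return message carries these plus \(\ell\) output qubits. The final announcement costs \(\ell\) bits. The total is \(2n+O(\log n)\).

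\emph{Observing Bob: equal-Gram rigidity.} Fix \(y,b\) and a Klauck-honest strategy for Bob. Honesty must hold for every pure message Bob could receive, so his action on the first message is a fixed CPTP map (depending on \(y,b\)) that sends \(\ket{\psi_i}\) to the pure state \(\ket{\phi_i^{y,b}}\) for every \(i=(x,a)\).

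Both families have Gram entries \(\kappa\) for \(i\neq j\) and \(1\) on the diagonal. The output inner products work out because the data branches are orthogonal through \(\ket{i}\) and the guard branches agree, including \(\ket{z_\bot}\). All these entries are nonzero.

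Take a Stinespring isometry \(W\). Purity of each output forces \(W\ket{\psi_i}=e^{\mathrm i\theta_i}\ket{\phi_i}\ket{e_i}\). Comparing inner products then gives \(|\braket{e_i|e_j}|=1\). Hence all \(\ket{e_i}\) coincide up to phase, and Bob's retained environment is a fixed state \(\tau_{y,b}\) independent of \(i\).

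Alice's subsequent actions are honest, and Bob receives only the classical residue \(Z\). His terminal view is therefore an \(i\)-independent channel, built from \(\tau_{y,b}\) and any later local processing, applied to \((y,b,Z)\). Here \(Z\) has the distribution of the ideal release mixed with probability \(\kappa\) with the constant \(z_\bot\). I expect this to be the main obstacle. The guard point mass could break pure DP unless it is handled correctly. The fix I would try is to note that the branch bit and \(z_\bot\) are held by Alice, and the announced \(Z\) is a mixture \((1-\kappa)P^{\mathrm{ideal}}_{d}+\kappa\,\delta_{z_\bot}\). The second component is identical for \(x\sim x'\), and the first satisfies the \(e^\eps\) ratio by Theorem~\ref{thm:ideal}. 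A convex combination of a pair satisfying the Loewner inequality with an identical pair still satisfies it. Averaging over Bob's own share \(b\) and applying data processing then yields pure \(\eps\)-QDP for Alice's input.

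\emph{Observing Alice.} Alice holds \(x,a\) and prepares the first message herself. Bob is honest, so the state she receives back is \(\ket{\phi_i^{y,b}}\) averaged over \(b\sim q_\eps\). She may also retain a purification of her preparation, but that purification is independent of \(y\).

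Measuring the branch register of the returned state, which is what Alice does, yields either the guard constant or the value \(f_{y,b}(i)=d+a+b\) alongside her own \(i\). Her terminal view is thus a \(y\)-independent channel applied to the classical mixture \((x,a,Z)\) with the same \(\kappa\)-mixture structure as in the Bob case. The Loewner inequalities follow exactly as before, from the ratio bound on \(q_\eps\) in Theorem~\ref{thm:ideal} together with data processing. This completes the plan.
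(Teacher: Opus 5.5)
Your utility and communication accounting match the paper's proof, and so does your observing-Bob argument. Equal-Gram rigidity through a Stinespring isometry makes Bob's retained register a fixed state in product with the returned message, so his view is an $x$-independent channel on $(y,B,Z)$. Your convex-combination treatment of the $z_\bot$ mass is correct, though it is not really an obstacle. The paper simply notes that replacing $Z$ by $z_\bot$ with probability $\kappa$ is an input-independent channel on the ideal release, so data processing covers it directly.

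The gap is in the observing-Alice case. You assume she measures the branch register (``which is what Alice does'') and then use the resulting $\kappa$-mixture. The theorem, however, quantifies over every Klauck-honest nonpreemptive Alice. She must send $\ket{\psi_{x,a}}$ and a final message with the prescribed distribution, but she may apply an arbitrary instrument to the returned quantum state. For example, she may coherently copy $R$ into the outgoing register while keeping $C,M$ coherent, or measure $C$ in a non-computational basis. For such strategies there is no branch measurement and no mixture, so your argument does not cover them. The missing observation, which the paper uses, is that for fixed $(x,a)$ honest Bob's returned pure state depends on $(y,b)$ only through $z=\HD(x,y)+a+b$:
\[
\ket{\phi_{x,a}^{\,y,b}}=\ket{\Phi_{x,a,z}}:=\sqrt{1-\kappa}\,\ket{0,x,a,z}+\sqrt{\kappa}\,\ket{1,\bot,z_\bot}.
\]
So the state Alice receives is $\sum_z\Pr[Z=z\mid x,a,y]\,\ketbra{\Phi_{x,a,z}}$, which is a fixed $y$-independent classical-to-quantum encoding of the ideal release. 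Her preparation registers are in product with the pure first message and hence independent of $y$. Together with any instrument she applies and the prescribed final message, they are postprocessing of the ideal state $(x,A,Z)$. Theorem~\ref{thm:ideal} and data processing then give the Loewner inequalities for every observing Alice, with no need to measure $C$.
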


\begin{proof}
First fix \(y,b\).  For \(i\ne j\),
\[
 \langle\psi_i|\psi_j\rangle=\kappa
 =\langle\phi_i^{\,y,b}|\phi_j^{\,y,b}\rangle,
\]
while both inner products are \(1\) for \(i=j\).  Any observing Bob must
return exactly the pure state \eqref{eq:guard-out}.  Claim~\ref{clm:gram}
therefore makes his retained complementary state independent of
\(i=(x,a)\), conditional on his own \(y,b\).  He subsequently receives
only the classical residue obtained from the ideal \(Z\) by the channel
\[
 Z\longmapsto
 \begin{cases}
 Z,&\text{with probability }1-\kappa,\\
 z_\bot,&\text{with probability }\kappa.
 \end{cases}
\]
All later local work is postprocessing.  Hence, for every such strategy of Bob, his complete terminal state is a channel output independent of \(x\)
applied to the ideal state \((y,B,Z)\).  Theorem~\ref{thm:ideal} and data
processing prove pure \(\eps\)-QDP for Alice's input, after averaging over
\(B\).

Now fix \(x\) and an observing Alice strategy.  Conditional on \(A=a\),
the state returned by honest Bob depends on \(y,B\) only through
\[
 Z=\HD(x,y)+a+B\pmod m.
\]
For fixed \(x,a\), it is the fixed encoding
\[
\begin{aligned}
 z\longmapsto {}&
 \sqrt{1-\kappa}\ket{0,x,a,z}\\
 &+\sqrt{\kappa}\ket{1,\bot,z_\bot}.
\end{aligned}
\]
Alice's retained preparation registers, arbitrary subsequent instrument,
prescribed final message, and terminal work together form a channel
independent of \(y\) applied to the classical ideal state \((x,A,Z)\).
Data processing of Theorem~\ref{thm:ideal} proves pure
\(\eps\)-QDP for Bob's input.  The two observing strategies were
arbitrary, which falls under the KHNP model.

The branch measurement is input-independent, with guard probability
\(\kappa\).  Conditional on the real branch, Theorem~\ref{thm:ideal}
gives error at most \(2/\sinh\eps\); on the fallback branch the error is
at most \(n\).  Therefore
\[
 \mathbb E|\widehat d-d|
 \le (1-\kappa)\frac{2}{\sinh\eps}+\kappa n
 <\frac{2}{\sinh\eps}+\gamma.
\]
The first message has \(1+n+\ell\) qubits, its return has
\(1+n+2\ell\), and the final residue has \(\ell\) classical bits. 
\end{proof}

\begin{claim}[Equal-Gram no-information principle]
\label{clm:gram}
Let \(\{\ket{u_i}\}\) and \(\{\ket{v_i}\}\) be pure-state families with
\[
 \langle u_i|u_j\rangle=\langle v_i|v_j\rangle\ne0
 \quad\text{for all }i,j.
\]
If a CPTP map (e.g., induced by either Alice or Bob) sends each \(\ketbra{u_i}\) to the pure outgoing state
\(\ketbra{v_i}\), then its complementary output state is independent of
\(i\).
\end{claim}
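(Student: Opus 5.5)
We will prove the claim through a Stinespring dilation, following the sketch in the introduction. Let \(\mathcal N\) be the CPTP map and \(W\) a Stinespring isometry for it, so \(\mathcal N(\rho)=\Tr_E(W\rho W^\dagger)\). The complementary channel is \(\mathcal N^c(\rho)=\Tr_{\text{out}}(W\rho W^\dagger)\).

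\emph{Step 1 (purity forces a product form).} Since \(\Tr_E(W\ketbra{u_i}W^\dagger)=\ketbra{v_i}\) is pure, the pure bipartite vector \(W\ket{u_i}\) has a pure marginal on the output system. A Schmidt decomposition then shows that it has Schmidt rank one. Hence
\[
W\ket{u_i}=e^{\mathrm{i}\theta_i}\ket{v_i}\otimes\ket{e_i}
\]
for some unit vector \(\ket{e_i}\) in \(E\) and some phase \(\theta_i\). Absorbing the phase into \(\ket{e_i}\), we may write \(W\ket{u_i}=\ket{v_i}\ket{e_i}\). It follows that \(\mathcal N^c(\ketbra{u_i})=\ketbra{e_i}\).

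\emph{Step 2 (inner products).} Because \(W\) is an isometry,
\[
\langle u_i|u_j\rangle=\langle v_i|v_j\rangle\langle e_i|e_j\rangle .
\]
By hypothesis \(\langle u_i|u_j\rangle=\langle v_i|v_j\rangle\neq0\). Dividing gives \(\langle e_i|e_j\rangle=1\) for all \(i,j\). This is where the nonvanishing of the Gram entries is essential. Without it, say for orthogonal classical messages, nothing constrains \(\ket{e_i}\), which is exactly the copying loophole.

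\emph{Step 3 (conclusion).} For unit vectors, \(\langle e_i|e_j\rangle=1\) together with Cauchy--Schwarz equality gives \(\ket{e_i}=\ket{e_j}\). If phases had been kept separate, we would instead get \(|\langle e_i|e_j\rangle|=1\), so the vectors agree up to a phase. In either case \(\ketbra{e_i}\) is independent of \(i\), so the complementary output state \(\mathcal N^c(\ketbra{u_i})\) is independent of \(i\).

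The conclusion is also independent of the chosen dilation. Any other Stinespring isometry differs from \(W\) by an isometry on the environment, and so do the corresponding complementary outputs. Applying a fixed map to an \(i\)-independent state keeps it \(i\)-independent.

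\emph{Main obstacle.} The only step needing care is making the claim apply to what the paper actually needs: an observer whose "CPTP map" acts jointly on the received message and pre-existing private registers, possibly correlated with his own \(y,b\). To handle this, we will condition on the observer's classical data \(y,b\) as in the proof of Theorem~\ref{thm:pure-khnp}. The initial private workspace is then a fixed state \(\omega\) independent of \(i\).

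We first purify \(\omega\) to a fixed pure state \(\ket{w}\). We then apply the argument to the family \(\ket{u_i}\ket{w}\). Its Gram matrix equals that of \(\{\ket{u_i}\}\), because the tensor factor is the same for every \(i\). Hence the hypothesis still holds, and the retained workspace together with the purifying reference is independent of \(i\). Tracing out the reference preserves this independence.
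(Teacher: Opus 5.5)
Your proposal is correct and follows the paper's route. Purity of the outgoing marginal forces the product form $W\ket{u_i}=e^{\mathrm{i}\theta_i}\ket{v_i}\ket{e_i}$, and preservation of inner products under the nonzero equal-Gram hypothesis then forces the $\ket{e_i}$ to coincide up to phase. Absorbing the phases, the dilation-independence remark, and purifying the observer's pre-existing workspace are harmless refinements that the paper leaves implicit; it simply conditions on $y,b$ in the proof of Theorem~\ref{thm:pure-khnp}.
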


\begin{proof}
Let \(W\) be a Stinespring isometry~\cite{Stinespring1955}.  Purity of the outgoing marginal
implies
\[
 W\ket{u_i}=e^{\mathrm i\theta_i}\ket{v_i}\ket{e_i}
\]
for some unit vectors \(\ket{e_i}\).  Preservation of inner products and
the nonzero equal-Gram hypothesis give
\[
 \langle e_i|e_j\rangle=e^{\mathrm i(\theta_i-\theta_j)}.
\]
Thus all \(\ket{e_i}\) agree up to phase, and hence all complementary
density operators \(\ketbra{e_i}\) are identical.
\end{proof}

\subsection{Variations}

The use of an exact \(\sqrt{\kappa}\) rotation is possible in the
arbitrary unitary model.  A discrete variant chooses a power of two
\(K\ge\max\{2,n/\gamma\}\) and replaces \eqref{eq:guard-in} by
\[
 \frac1{\sqrt K}\left(\ket{0,\bot}
 +\sum_{r=1}^{K-1}\ket{r,i}\right).
\]
The returned state analogously appends \(z_\bot\) on branch \(0\) and
\(f_{y,b}(i)\) on every branch \(r\ge1\).
The distinct-state overlap and fallback probability are \(1/K\), the
utility penalty is at most \(n/K\le\gamma\), and Hadamards prepare the
branch register exactly.\footnote{
When \(K=2^k\), the uniform branch register
\[
\frac{1}{\sqrt K}\sum_{r=0}^{K-1}\ket r
\]
is prepared exactly by applying \(H^{\otimes k}\) to
\(\ket{0^k}\). Controlled basis-state preparation then produces
\[
\frac{1}{\sqrt K}
\left(
\ket{0}\ket{\perp}
+
\sum_{r=1}^{K-1}\ket r\ket{x,A}
\right)
\]
exactly.
}  Its communication cost is
\(O(n+\log(1/\gamma))\).

Pure QDP already implies \((\eps,\delta)\)-QDP with the same error.
Approximate privacy permits strictly sharper calibration.

\begin{algorithm}
\caption{Guarded Coherent Protocol for Pure KHNP-QDP}
\label{alg:khnp-pure}
\begin{algorithmic}[1]
\Require Alice has \(x\in\{0,1\}^n\); Bob has
\(y\in\{0,1\}^n\); privacy parameter \(\varepsilon>0\);
utility parameter \(\gamma>0\).
\Ensure A common output with pure \(\varepsilon\)-QDP in the
Klauck-honest nonpreemptive model.

\State Set
\[
m\gets 4n+1,
\qquad
\ell\gets\left\lceil\log_2m\right\rceil,
\qquad
\kappa\gets\frac{\gamma}{n+\gamma}.
\]
\State Fix a public fallback residue \(z_\perp\in\Z_m\).

\State \textbf{Alice:} Sample
\[
A\sim q_\varepsilon.
\]
\State \textbf{Bob:} Sample
\[
B\sim q_\varepsilon.
\]

\State Alice prepares a one-qubit branch register \(C\) and an
\((n+\ell)\)-qubit data register \(M\) in the pure state
\[
\ket{\psi_{x,A}}_{CM}
=
\sqrt{1-\kappa}\,
\ket{0}_C\ket{x,A}_M
+
\sqrt{\kappa}\,
\ket{1}_C\ket{\perp}_M,
\]
where \(\ket{\perp}\) is a fixed public computational-basis state.

\State \textbf{Alice \(\rightarrow\) Bob:}
Send the quantum message
\[
M_{A\rightarrow B}
=
\ket{\psi_{x,A}}_{CM}.
\]

\State Bob initializes an \(\ell\)-qubit residue register \(R\)
to \(\ket{0}\).

\State Bob applies a controlled reversible operation satisfying
\[
\ket{0}\ket{x,A}\ket{0}
\longmapsto
\ket{0}\ket{x,A}
\ket{\HD(x,y)+A+B\bmod m}
\]
and
\[
\ket{1}\ket{\perp}\ket{0}
\longmapsto
\ket{1}\ket{\perp}\ket{z_\perp}.
\]

\State The resulting prescribed pure return message is
\[
\begin{aligned}
&\ket{\phi^{y,B}_{x,A}}_{CMR}
={}\\
&
\sqrt{1-\kappa}\,
\ket{0}_C\ket{x,A}_M
\ket{\HD(x,y)+A+B\bmod m}_R
\\
&+
\sqrt{\kappa}\,
\ket{1}_C\ket{\perp}_M\ket{z_\perp}_R.
\end{aligned}
\]

\State \textbf{Bob \(\rightarrow\) Alice:}
Return the entire quantum message
\[
M_{B\rightarrow A}
=
\ket{\phi^{y,B}_{x,A}}_{CMR}.
\]

\State Alice measures \(C\) and \(R\) in the computational basis,
obtaining outcomes
\[
(c,z)\in\{0,1\}\times\Z_m.
\]

\State Set the public released residue to
\[
Z\gets z.
\]
\Comment{When \(c=1\), \(z=z_\perp\).}

\State \textbf{Alice \(\rightarrow\) Bob:}
Send the classical message
\[
M_{A\rightarrow B}^{\mathrm{final}}=Z\in\Z_m.
\]

\State \textbf{Alice:} Output
\[
\widehat d_A\gets\Dec(Z).
\]

\State \textbf{Bob:} Output
\[
\widehat d_B\gets\Dec(Z).
\]

\State \Return
\[
\widehat d_A=\widehat d_B.
\]
\end{algorithmic}
\end{algorithm}

\begin{theorem}[Noise-calibrated approximate KHNP-QDP]
\label{thm:approx-khnp}
Let \(n\ge1\), \(\eps>0\), \(0<\delta<1\), and \(\gamma>0\).
Run the guarded protocol with
\[
 A,B\sim q_{\alpha_n^\star(\eps,\delta)}.
\]
Every Klauck-honest nonpreemptive terminal view satisfies
\((\eps,\delta)\)-QDP, both honest parties produce a common output, and
\[
\mathbb E|\widehat d-d|
\le U_n(\alpha_n^\star)+\gamma
\le\frac{2}{\sinh\alpha_n^\star}+\gamma
<\frac{2}{\sinh\eps}+\gamma.
\]
Communication is \(O(n)\) qubits in the arbitrary unitary model.
\end{theorem}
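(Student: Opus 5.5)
The plan is to reuse the proof of Theorem~\ref{thm:pure-khnp} essentially verbatim, replacing the noise parameter \(\eps\) by \(\alpha^\star:=\alpha_n^\star(\eps,\delta)\). Proposition~\ref{prop:calibration} supplies the ideal privacy and utility statements at this parameter. The key observation is that the privacy reduction in Theorem~\ref{thm:pure-khnp} never uses the value of the noise parameter. Claim~\ref{clm:gram} requires only that the sent and returned families have equal Gram matrices with nonzero entries (all off-diagonal entries equal to \(\kappa\)). This depends on \(\kappa\) and on the labelling \(i=(x,a)\), not on the law of \(A,B\). Hence, for every Klauck-honest nonpreemptive Bob, his terminal state is \(\mathcal N_{y}\bigl(\rho^{\rm ideal}_{y,B,Z}\bigr)\) for a channel \(\mathcal N_y\) independent of \(x\). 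That channel is the composition of the input-independent complementary state, the \(\kappa\)-fallback channel \(Z\mapsto z_\bot\), and arbitrary postprocessing. Symmetrically, an observing Alice's terminal state is a \(y\)-independent channel applied to \((x,A,Z)\).

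The privacy step then proceeds as follows. By Proposition~\ref{prop:calibration}, conditionally on the observer's own share, the ideal views at \(\alpha^\star\) have hockey-stick divergence \(E_{e^\eps}\le\Delta_{n,\eps}(\alpha^\star)=\delta\) in both directions. Two points need to be checked.

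\begin{enumerate}
\item \emph{Averaging over the observer's own share.} The observer's share is a classical register with an input-independent law. The state is block-diagonal in that register, so \(E_{e^\eps}\) is the average of the blockwise divergences and stays \(\le\delta\).
\item \emph{Data processing.} The quantum hockey-stick divergence \(E_\lambda(\rho\Vert\sigma)=\Tr(\rho-\lambda\sigma)_+\) is monotone under CPTP maps. This is standard (e.g.\ \cite{hirche2023quantum}) and is already invoked after Definition~\ref{def:qdp}. Applying the same input-independent channel to both adjacent ideal states therefore gives \((\eps,\delta)\)-QDP for the terminal view.
\end{enumerate}

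The only real subtlety is confirming that the channel \(\mathcal N\) is genuinely the same for adjacent inputs. For Bob this follows because the complementary state from Claim~\ref{clm:gram} is independent of \(i\), hence of \(x\), conditional on \((y,b)\). One should also note that the strategy may choose its Stinespring dilation depending on \((y,b)\) but not on \(x\), which Definition~\ref{def:khnp} permits. This is the step I would write most carefully, though it is identical to the pure case.

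For utility, the branch measurement yields the fallback branch with probability \(\kappa=\gamma/(n+\gamma)\), independently of inputs. On the data branch the released residue is exactly the ideal \(Z\), so by Proposition~\ref{prop:calibration} the conditional error is at most \(U_n(\alpha^\star)\). On the fallback branch \(\Dec(z_\bot)\in D\), so the error is at most \(n\). This gives
\[
\mathbb E|\widehat d-d|\le(1-\kappa)U_n(\alpha^\star)+\kappa n\le U_n(\alpha^\star)+\frac{n\gamma}{n+\gamma}\le U_n(\alpha^\star)+\gamma .
\]
Then \(U_n(\alpha^\star)\le 2/\sinh\alpha^\star\) by Proposition~\ref{prop:calibration}. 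Strictness follows because \(\alpha^\star>\eps\) (uniqueness and strict monotonicity of \(\Delta_{n,\eps}\), with \(\Delta_{n,\eps}(\eps)=0<\delta\)) and \(\sinh\) is strictly increasing. The common output holds because both parties decode the same classical \(Z\). The communication count is unchanged from Theorem~\ref{thm:pure-khnp}, namely \(1+n+\ell\) and \(1+n+2\ell\) qubits plus \(\ell\) bits, i.e.\ \(O(n)\), since only the classical sampling law changed.
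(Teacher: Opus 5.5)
Your proposal is correct and follows the paper's proof essentially step for step. Both proofs apply Proposition~\ref{prop:calibration} at $\alpha_n^\star$, observe that the equal-Gram reduction of Theorem~\ref{thm:pure-khnp} does not depend on the noise law, invoke data processing of the hockey-stick divergence, and split utility as $(1-\kappa)U_n(\alpha_n^\star)+\kappa n$ with $\kappa n<\gamma$. Your extra checks (averaging over the observer's own share, and strictness via $\alpha_n^\star>\eps$) are details the paper leaves implicit.
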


\begin{proof}
Proposition~\ref{prop:calibration} makes the ideal views at
\(\alpha=\alpha_n^\star\) exactly \((\eps,\delta)\)-QDP.  The equal-Gram
argument in Theorem~\ref{thm:pure-khnp} does not depend on the value of
\(\alpha\): every complete KHNP terminal view is a quantum channel,
independent of the protected input, applied to its corresponding ideal
view.  Data processing of the hockey-stick divergence proves privacy.
The real branch has expected error at most \(U_n(\alpha_n^\star)\);
the input-independent guard branch contributes at most
\(\kappa n<\gamma\).  The communication calculation is unchanged.
\end{proof}

A convenient explicit choice is
\begin{equation}
\bar\alpha(\eps,\delta)
:=\log\frac{e^\eps+\delta}{1-\delta}.
\label{eq:explicit-alpha}
\end{equation}
Indeed, the finite-cycle factor in \eqref{eq:delta-cal} is at most
\((1+e^{-\alpha})^{-1}\), since for \(r=e^{-\alpha}\),
\[
\frac{1-r^{2n}}{1+r-2r^{2n+1}}\le\frac1{1+r}.
\]
Therefore
\[
\Delta_{n,\eps}(\alpha)
\le\frac{e^\alpha-e^\eps}{e^\alpha+1}.
\]
Thus \(\alpha_n^\star\ge\bar\alpha\), and Theorem~\ref{thm:approx-khnp}
implies the fully explicit bound
\begin{equation}
\mathbb E|\widehat d-d|
\le
\frac{4(1-\delta)(e^\eps+\delta)}
{(e^\eps+1)(e^\eps-1+2\delta)}
+\gamma.
\label{eq:explicit-error}
\end{equation}
The exact finite-\(n\) calibration is at least as accurate.

\subsection{Advantage}

\begin{corollary}[Same-model quantum advantages]
\label{cor:advantage}
For every fixed sufficiently small constant \(\eps>0\) and fixed
\(\gamma>0\), common-output Hamming distance has \(O(1)\) expected error
and \(O(n)\) quantum communication  in the KHNP model, whereas every
classical information-theoretic pure-\(\eps\)-DP KHNP protocol has
\(\Omega(\sqrt n)\) error.  For strong approximate privacy
\(\delta=o(1/n)\), the quantum error remains \(O(1)\), whereas the
classical KHNP error is \(\Omega(\sqrt n/\log n)\).
\end{corollary}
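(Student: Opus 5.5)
The corollary has a quantum upper-bound side and a classical lower-bound side, and I would prove them separately. The upper bound is a direct instantiation of the earlier theorems. The lower bound is a reduction showing that every classical KHNP protocol is already subject to the lower bounds of McGregor et al.

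\emph{Quantum side.} For a fixed constant \(\eps>0\) and fixed \(\gamma>0\), Theorem~\ref{thm:pure-khnp} gives a pure \(\eps\)-KHNP-QDP protocol with common output. Its expected error is below \(2/\sinh\eps+\gamma\), a constant independent of \(n\), and its communication is \(2n+O(\log n)\).

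For the approximate case with \(\delta=o(1/n)\) (so \(0<\delta<1\) for large \(n\)), there are two routes:
\begin{itemize}
\item use the pure protocol directly, since pure QDP implies \((\eps,\delta)\)-QDP with the same error;
\item use Theorem~\ref{thm:approx-khnp}, whose error bound \(2/\sinh\alpha_n^\star+\gamma<2/\sinh\eps+\gamma\) is again \(O(1)\).
\end{itemize}
Either route gives constant error and \(O(n)\) qubits.

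\emph{Classical side.} Fix any classical information-theoretic protocol that is pure \(\eps\)-DP in the KHNP sense. Consider the particular Klauck-honest nonpreemptive strategy for Bob that copies every incoming and outgoing classical message into a private register and keeps its coins. This strategy is admissible:
\begin{itemize}
\item copying a classical message leaves the outgoing message distribution exactly as prescribed, as noted after Definition~\ref{def:khnp};
\item the strategy completes every round.
\end{itemize}
Hence KHNP privacy for Alice implies that the joint distribution of \((y,\text{Bob's coins},T)\) is \(\eps\)-DP in \(x\), where \(T\) is the transcript. The symmetric statement holds for Alice's view. This is precisely Definition 2.3 of~\cite{mcgregor2011limits}.

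Next I would observe that the common output is a function of the transcript. Each party's output is a function of its own view, and the two outputs agree with probability one in honest executions. So on the support the common output equals a function of \((T,\text{coins}_A,x)\) and also of \((T,\text{coins}_B,y)\). By the rectangle and conditional-independence structure of classical transcripts, it is therefore a.s.\ a function of \(T\); alternatively, one can simply let the output be one party's output, which suffices for their accuracy notion.

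I would then invoke the Hamming-distance lower bound of McGregor et al.: for sufficiently small constant \(\eps\), any such protocol has expected error \(\Omega(\sqrt n)\). Their approximate version gives \(\Omega(\sqrt n/\log n)\) when \(\delta=o(1/n)\). The restriction to sufficiently small \(\eps\) in the corollary matches the parameter range of their theorem.

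\emph{Main obstacle.} The delicate step is making the transfer of classical lower bounds rigorous. Two things need care:
\begin{itemize}
\item confirming that the copying strategy satisfies the Klauck-honesty condition for every received message sequence, including conditioning on each realization of the private tape;
\item matching their error measure (additive error with constant probability versus expected error) to ours.
\end{itemize}
For the second point I would use Markov's inequality: an \(O(1)\) expected-error bound yields constant-probability \(O(1)\) additive error. Their lower bound states that additive error \(o(\sqrt n)\) with constant probability is impossible, so an expected error of \(o(\sqrt n)\) would contradict it.
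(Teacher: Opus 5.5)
Your proposal is correct and follows the paper's proof essentially step for step. The quantum side comes from Theorems~\ref{thm:pure-khnp} and~\ref{thm:approx-khnp}. The classical side uses the transcript-copying Klauck-honest observer to reduce to McGregor et al.'s transcript-and-coins views, conditional independence of the coins given the transcript to make the common output a function of the transcript, and Markov's inequality to turn a hypothetical $o(\sqrt n)$ expected error into a constant-probability violation of their bounds.
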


\begin{proof}
A classical KHNP observer may copy every received and sent classical
message, retain its private coins, and still send exactly the prescribed
messages and complete the protocol \cite[Sec.~2.3]{klauck2004quantum}.
Thus privacy against every classical KHNP strategy implies privacy of the
standard transcript-and-coins views used by McGregor et al.\
\cite[Def.~2.3]{mcgregor2011limits}.  Moreover, because the parties'
private coins are independent conditional on a classical transcript,
probability-one agreement makes the common output a function of that
transcript on its support.  The pure-DP partition-bound consequence of
McGregor et al.\ then gives the \(\Omega(\sqrt n)\) Hamming distance
error lower bound for suitable constant \(\eps\)
\cite[Sec.~4.2]{mcgregor2011limits}.  A hypothetical \(o(\sqrt n)\)
expected error would, by Markov's inequality, give \(o(\sqrt n)\) error
with constant success probability and contradict that bound.  The
quantum upper bound follows from Theorem~\ref{thm:pure-khnp}.  The same
transcript-view reduction and the approximate lower bound of McGregor
et al.\ give \(\Omega(\sqrt n/\log n)\) error for \(\delta=o(1/n)\), while
Theorem~\ref{thm:approx-khnp} is \(O(1)\) error for fixed \(\eps,\gamma\).
\end{proof}

The separation is specific to exact KHNP honesty.  For the averaged first
message and adjacent \(x,x'\), the projector
\[
 P_x=\sum_{a\in\Z_m}\ketbra{0,x,a}
\]
has probability \(1-\kappa\) on input \(x\) and probability \(0\) on
input \(x'\).  A receiver allowed to measure \(P_x\) and abort therefore
violates pure DP.  Such a measurement cannot be followed by the exact
coherent return \eqref{eq:guard-out}; KHNP excludes it, whereas the
general-CPTP RR model includes it.

\section{Discussion}
\label{sec:discussion}

\subsection{Interpreting the Three View Models}
\label{sec:discussion-models}

The prescribed-channel (PC), Klauck-honest nonpreemptive (KHNP), and
retention-robust (RR) models impose genuinely different requirements on
a party's behavior and retained information. 
The distinction between
these models is essential for interpreting both the positive results and their
limitations.

In the PC model (Definition~\ref{def:pc}), the parties execute the fixed local channels specified
by the protocol, including prescribed uncomputation and discard
operations, and only the designated terminal registers are included in
their views. This view model describes the terminal state of a fixed
honest implementation. It does not allow a party to make additional
copies of intermediate classical messages before the prescribed
uncomputation. Indeed, Proposition~\ref{prop:pc} gives an
exact classical reversible realization of the ideal functionality in
this model: the protocol computes the noisy residue, copies it into the
permanent output registers, reverses the remaining computation, and
retains no transcript. Note that
PC privacy alone cannot support a
quantum--classical separation.

The KHNP model (Definition~\ref{def:khnp}) allows substantially more local freedom. A party may
perform arbitrary local quantum operations and retain arbitrary private
systems, provided that every outgoing message density operator is
exactly the one prescribed by the protocol and that the party completes
the interaction. In the classical setting, this condition permits a
party to copy every received message while continuing to send the
prescribed messages. Thus a classical KHNP observer may retain the
standard view (of the transcript and private coins) appearing in classical
two-party differential-privacy lower bounds~\cite{McGregorMPRTV10,mcgregor2011limits}.

Our guarded coherent construction exploits the fact that the same
conclusion need not hold in the quantum setting. Its incoming messages are
non-orthogonal, and the prescribed returned pure-state family has exactly
the same nonzero Gram matrix. If an observing party returns the exact
prescribed state, Stinespring dilation and preservation of inner
products force the complementary state retained by that party to be
independent of the protected input. The resulting quantum--classical
separation is therefore obtained under the same exact
message-preserving KHNP model: classical messages can be copied
while preserving the protocol, whereas the guarded non-orthogonal
quantum messages cannot be copied in this manner.

The RR model (Definition~\ref{def:rr}) is stronger. It permits arbitrary interactive CPTP
strategies, including strategies that retain extra systems, change
outgoing messages, measure intermediate states, or abort. The guarded
protocol is not RR secure. For example, a receiver may measure a
projector that distinguishes the real branch for one candidate input
and then terminate the protocol. Such a strategy is excluded by KHNP
because it cannot subsequently return the exact coherent state, but it
is included in the all-CPTP RR class of strategies.

The three models therefore can be used to answer \textit{different questions}:
\begin{enumerate}
\item\textbf{PC:}
What information remains in one prescribed honest implementation?
\item\textbf{KHNP:}
What can a party retain while preserving every prescribed
message exactly and completing the protocol?
\item\textbf{RR:}
What privacy remains against arbitrary interactive deviations,
retention, and early termination?
\end{enumerate}

The quantum advantage established in our work is
via the KHNP model. The advantage is
stronger than a comparison based only on prescribed terminal
registers, but it does not imply
malicious or simulation-based security.

\subsection{From KHNP Honesty to Verifiable Message Preservation}
\label{sec:discussion-verification}

Although KHNP provides a clean same-model separation,
its practical motivation might be 
less immediate. A user may reasonably
object that a genuinely malicious counterparty has no obligation to
preserve the prescribed message or complete the protocol. From that
perspective, proving privacy only for KHNP strategies does not by itself
guarantee protection against the adversaries of practical concern~\cite{klauck2004quantum}.

A potentially more compelling interpretation is to supplement the
guarded protocol with a mechanism that allows the sender to
\emph{test} whether the returned quantum state is close to the state
that should have been returned. KHNP would then describe the
detection limit of a broader verifiable protocol: an exactly
message-preserving party is accepted with probability one, while a
party that significantly changes the prescribed return risks being
detected.

Suppose Alice expects the pure returned state
\[
    \ket{\phi_i^{y,b}}.
\]
If Alice can construct the corresponding verification measurement, she
may apply the two-outcome projective test
\[
    \left\{
        \Pi_i^{y,b},
        I-\Pi_i^{y,b}
    \right\},
    \qquad
    \Pi_i^{y,b}
    :=
    \ket{\phi_i^{y,b}}\!\bra{\phi_i^{y,b}}.
\]
For an honestly returned state, the acceptance probability is one. If
Bob instead returns a state \(\rho_i\), the acceptance probability is
\[
    p_{\mathrm{acc}}(i)
    =
    \operatorname{Tr}\!\left(
        \Pi_i^{y,b}\rho_i
    \right)
    =
    \bra{\phi_i^{y,b}}\rho_i\ket{\phi_i^{y,b}},
\]
and the rejection probability is
\[
    p_{\mathrm{rej}}(i)
    =
    1-
    \bra{\phi_i^{y,b}}\rho_i\ket{\phi_i^{y,b}}.
\]
Thus, deviations that significantly reduce the fidelity of the returned state, relative to the prescribed state, are detected with high probability.

This suggests an operational interpretation of the equal-Gram result.
If Bob passes the ideal test with probability one for every possible
input, then he returns the exact prescribed pure-state family. The
equal-Gram no-information principle then implies that his retained
complementary state is independent of Alice's protected input. At the
opposite extreme, a strategy that extracts substantial distinguishing
information is expected to disturb the returned state and thereby
create a nonzero probability of rejection.
Extracting information about a non-orthogonal quantum-state ensemble
creates disturbance, and quantitative information and disturbance
tradeoffs are well studied~\cite{FuchsPeres1996,BuscemiSacchi2007}.
Quantitative mathematical statements
relating retained information, fidelity of the return, and detection
probability could turn this intuition into an approximate,
verifiable analogue of KHNP privacy. We believe this is
important future work.
Verification of a known pure target state can be formulated as a
two-outcome test whose accepting operator contains the target-state
projector~\cite{PallisterLindenMontanaro2018}.

Such verification could provide practical value even when it does not
prevent every privacy violation. First, it could inform a user that a
counterparty may have deviated from the privacy-preserving interaction.
Second, repeated tests, trap rounds, or random audits could discourage
malicious behavior by creating a risk of public detection. This may be
relevant when service providers, laboratories, or network operators
care about reputation, contractual compliance, or continued access to
clients.
Hidden trap states and randomly selected test computations are standard
tools in verifiable delegated quantum computation
\cite{FitzsimonsKashefi2017}.

The quantum setting is important here. Classically, a receiver can copy
a message perfectly and then return or forward an unchanged copy. A
test of the returned classical message cannot reveal that the receiver
retained an additional copy. In contrast, the guarded quantum states
are non-orthogonal. Extracting an input-dependent record while preserving
the exact return is forbidden by the equal-Gram argument. Verification
of the returned quantum state can therefore certify a property that has
no direct classical analogue: consistency between low disturbance and
the absence of an input-dependent retained complement~\cite{PallisterLindenMontanaro2018}.

\section{Conclusion}
\label{sec:conclusion}

We have shown that classical information-theoretic accuracy lower bounds for
two-party differential privacy do not automatically 
extend to models where quantum
communication is allowed. Under Klauck's honest, nonpreemptive, message-preserving model,
two-party Hamming distance admits an information-theoretic quantum protocol
with \(O(n)\) communication and \(O(1)\) expected error for constant privacy
parameters. In the same model, classical protocols require
\(\Omega(\sqrt n)\) error under pure privacy and
\(\Omega(\sqrt n/\log n)\) error under strong approximate privacy.

The separation is obtained through a distributed cyclic-geometric noise
mechanism and a guarded coherent round trip. Exact preservation of the Gram
matrix of the transmitted pure states forces any Klauck-honest observer's
retained complementary state to be independent of the protected input.
For approximate differential privacy, an exact finite-cycle hockey-stick divergence calculation uses
the \(\delta\)-budget to choose a more concentrated noise distribution and
strictly improve accuracy.

The result is inherently model dependent. Fixed prescribed channels already
admit an exact classical reversible realization, whereas the guarded quantum
protocol is not secure against arbitrary retention-capable CPTP strategies,
such as measure-and-abort attacks. We provide a classical
randomized-response baseline for this stronger model in the appendix. Whether quantum
communication can improve retention-robust accuracy remains open.

\section{Acknowledgements}

We are grateful for helpful related discussions
with Noam Mazor, Kaiyan Shi, Manasi Shingane, and Salil Vadhan. Daniel Alabi was supported by the Simons
Foundation (965342, D.A.) as part of the Junior Fellowship from the Simons Society of Fellows. Emil T. Khabiboulline~acknowledges support from the NRC Research Associateship Program at the National Institute of Standards and Technology (NIST), administered by the Fellowships Office of the National Academies of Sciences, Engineering, and Medicine, and was supported in part by ONR MURI (award No. N000142612102).

\bibliographystyle{IEEEtran}
\bibliography{main}

\clearpage

\appendix

The appendix provides expanded constructions and additional details supporting the main results.

We first give a complete description of the ideal common-output functionality for Hamming distance. In this functionality, Alice and Bob independently sample cyclic-geometric noise shares and receive the same noisy release. We explain why two independent shares are needed to protect both parties while producing a common output, and we give the corresponding privacy and accuracy guarantees. We then present an explicit reversible classical realization of this functionality in the prescribed-channel model. This construction clarifies why the prescribed-channel view is too weak to yield a quantum--classical separation: once intermediate computation is uncomputed, the designated terminal views coincide exactly with those of the ideal functionality.
We next give the full noise-calibrated guarded protocol for approximate QDP. The construction uses the same coherent communication procedure as the pure QDP protocol, but chooses the cyclic-geometric parameter using the exact finite-cycle hockey-stick divergence. This allows the available \(\delta\)-budget to be converted directly into a more concentrated noise distribution, yielding strictly smaller error without changing the communication complexity or the equal-Gram privacy argument. We also provide numerical examples illustrating the utility improvement over the pure QDP calibration. Finally, we give a common-output protocol for the stronger retention-robust model. The protocol combines randomized response for Alice's input with geometric output noise for Bob's input and remains private against arbitrary retention-capable CPTP strategies. Its expected error is \(O(\sqrt{n})\) for constant privacy parameters, providing a classical retention-robust baseline against which future quantum protocols can be compared.

\subsection{Implementing the Ideal Functionality}

In general,
cryptographic protocols are designed to be \emph{realizations} of ideal functionalities, often conceptualized through \emph{ideal views}. In modern cryptography, security is proven by showing that a real-world protocol emulates an ideal world where a trusted third party safely handles the task~\cite{KL14}.

Theorem~\ref{thm:ideal} gives the privacy and accuracy
guarantees for the ideal views for solving the Hamming
distance problem. We now describe, in more detail, how
to implement the common-output distributed-noise ideal
functionality. Throughout, 
we denote this functionality as $\mathcal F$.

\begin{algorithm}
\caption{Ideal Common-Output Distributed-Noise Functionality}
\label{alg:ideal-common-output}
\begin{algorithmic}[1]
\Require Alice has \(x\in\{0,1\}^n\); Bob has
\(y\in\{0,1\}^n\); parameter \(\eps>0\).
\Ensure Alice and Bob receive the same estimate of
\(d=\HD(x,y)\).

\State Set \(m\gets 4n+1\).
\State Define the cyclic distance
\[
d_m(u,v)
:=
\min\{[u-v]_m,[v-u]_m\}.
\]
\State Define the cyclic-geometric distribution
\[
q_\eps(g)
:=
\frac{\exp(-\eps d_m(g,0))}
     {\sum_{h\in\Z_m}\exp(-\eps d_m(h,0))}
\qquad (g\in\Z_m).
\]

\Statex
\State \textbf{Alice:} Sample \(A\sim q_\eps\).
\State \textbf{Bob:} Sample \(B\sim q_\eps\).

\State \textbf{Alice \(\rightarrow\mathcal{F}\):}
Send the classical pair
\[
M_{A\rightarrow\mathcal F}=(x,A)
\]
over Alice's ideal private interface.

\State \textbf{Bob \(\rightarrow\mathcal{F}\):}
Send the classical pair
\[
M_{B\rightarrow\mathcal F}=(y,B)
\]
over Bob's ideal private interface.

\State \textbf{Functionality \(\mathcal F\):} Compute
\[
Z
\gets
\HD(x,y)+A+B
\pmod m.
\]

\State \textbf{\(\mathcal F\rightarrow\) Alice:}
Send the classical residue
\[
M_{\mathcal F\rightarrow A}=Z\in\Z_m.
\]

\State \textbf{\(\mathcal F\rightarrow\) Bob:}
Send the same classical residue
\[
M_{\mathcal F\rightarrow B}=Z\in\Z_m.
\]

\State Define
\[
\Dec(z)
:=
\min\arg\min_{r\in\{0,\ldots,n\}}d_m(z,r),
\]
where ties are broken toward the smaller integer.

\State \textbf{Alice:} Output
\[
\widehat d_A\gets\Dec(Z).
\]

\State \textbf{Bob:} Output
\[
\widehat d_B\gets\Dec(Z).
\]

\State \Return
\[
\widehat d_A=\widehat d_B=\Dec(Z).
\]
\end{algorithmic}
\end{algorithm}

\paragraph{Technical Explanation of Algorithm~\ref{alg:ideal-common-output}}
The ideal functionality separates the privacy noise into two independently
generated shares. Alice knows \(A\), Bob knows \(B\), and both parties receive
only the common release
\[
Z=\HD(x,y)+A+B \pmod m,
\qquad m=4n+1.
\]
The use of two noise shares is essential for a common-output protocol. If
Alice generated all of the noise, she could subtract it from \(Z\) and recover
the exact Hamming distance; the analogous problem would arise if Bob generated
all of the noise. Instead, each party knows only one share and must rely on the
other party's unknown share for privacy.

For example, Alice's ideal view is
\[
V_A^{\mathrm{ideal}}=(x,A,Z).
\]
Conditioned on \(A=a\), its probability mass is
\[
\Pr[A=a,Z=z\mid x,y]
=
q_\eps(a)\,
q_\eps\!\left(z-\HD(x,y)-a\right).
\]
Changing one bit of \(y\) changes \(\HD(x,y)\) by at most one. Since the
cyclic-geometric distribution satisfies
\[
e^{-\eps}
\leq
\frac{q_\eps(g+1)}{q_\eps(g)}
\leq
e^\eps,
\]
the likelihood ratio between Alice's views on adjacent values of \(y\) is at
most \(e^\eps\). The symmetric argument protects Alice's input in Bob's
view. Notice that the privacy parameter is \(\eps\), rather than
\(2\eps\): for either observer, only the other party's unknown noise share
must absorb the unit change in Hamming distance.

The modulus \(m=4n+1\) is chosen so that the valid distance interval
\[
D=\{0,\ldots,n\}
\]
occupies a sufficiently short arc of the cycle. The nearest-point decoder
therefore satisfies
\[
\left|\Dec(Z)-\HD(x,y)\right|
\leq
d_m(A+B,0)
\leq
d_m(A,0)+d_m(B,0).
\]
Each cyclic-geometric share has expected circular magnitude at most
\(1/\sinh\eps\), and hence
\[
\mathbb{E}
\left[
\left|\Dec(Z)-\HD(x,y)\right|
\right]
\leq
\frac{2}{\sinh\eps}.
\]
The factor of two is a utility cost of requiring the parties to output the
same private release: both independently generated shares appear in the
common value \(Z\).

\subsection{Exact Classical Reversible Protocol}

The prescribed-channel (PC) realization shows that the ideal common-output functionality can already be implemented exactly using classical reversible computation. Alice samples her noise share \(A\), Bob samples his noise share \(B\), and the parties reversibly compute

\[
Z = \mathrm{HD}(x,y) + A + B \pmod m .
\]

Once \(Z\) has been computed, it is copied into permanent output registers for both parties. The parties then run the reversible computation backward, clearing the temporary message and workspace registers while leaving the output registers unchanged. At termination, Alice's view is therefore exactly \((x,A,Z)\), while Bob's view is exactly \((y,B,Z)\), matching the views of the ideal functionality.

The important point is that reversibility does not make the intermediate messages private. While the computation is being carried out, the computational-basis messages may reveal substantial information about the parties' inputs. The PC model simply does not include copies of these intermediate messages in the final view unless the prescribed protocol explicitly retains them. Because the protocol uncomputes its temporary state, only the intended noisy output remains. Thus, in the PC model, the classical protocol inherits exactly the privacy and accuracy guarantees of the ideal functionality.

This construction also explains why the PC model cannot by itself demonstrate a quantum advantage. Classically, the parties can compute the functionality, copy out the noisy result, and erase the remaining transcript by reversible uncomputation. The distinction with the KHNP model is that a KHNP observer is allowed to retain additional information while still sending the prescribed messages. A classical KHNP observer can therefore copy the interaction transcript before it is uncomputed, whereas the PC model only evaluates the registers left by the prescribed implementation. This difference is why the classical lower bounds apply in the KHNP model but not directly to the weaker PC realization.

\begin{algorithm}
\caption{Exact Classical Reversible Realization in the
Prescribed-Channel Model}
\label{alg:pc-reversible}
\begin{algorithmic}[1]
\Require Alice has \(x\in\{0,1\}^n\); Bob has
\(y\in\{0,1\}^n\); parameter \(\eps>0\).
\Ensure Both parties output the same estimate of \(\HD(x,y)\).

\State Set
\[
m\gets 4n+1,
\qquad
\ell\gets\left\lceil\log_2m\right\rceil.
\]

\State \textbf{Alice:} Sample \(A\sim q_\eps\).
\State \textbf{Bob:} Sample \(B\sim q_\eps\).

\State Alice initializes an \((n+\ell)\)-bit message register
\(M_{XA}\) to \(0\).
\State Alice reversibly copies her classical input and noise share:
\[
\ket{x,A}_{XA}\ket{0}_{M_{XA}}
\longmapsto
\ket{x,A}_{XA}\ket{x,A}_{M_{XA}}.
\]

\State \textbf{Alice \(\rightarrow\) Bob:}
Send the computational-basis message
\[
M_{XA}=\ket{x,A}.
\]

\State Bob initializes an \(\ell\)-bit temporary register
\(T\) to \(\ket{0}\).

\State Bob reversibly computes
\[
\ket{x,A}_{M_{XA}}
\ket{y,B}_{YB}
\ket{0}_{T}
\longmapsto
\ket{x,A}_{M_{XA}}
\ket{y,B}_{YB}
\ket{Z}_{T},
\]
where
\[
Z=\HD(x,y)+A+B\pmod m.
\]

\State Bob initializes a permanent output register \(Z_B\) to \(0\)
and copies the result:
\[
\ket{Z}_{T}\ket{0}_{Z_B}
\longmapsto
\ket{Z}_{T}\ket{Z}_{Z_B}.
\]

\State Bob initializes an \(\ell\)-bit message register \(M_Z\)
to \(0\) and copies
\[
\ket{Z}_{T}\ket{0}_{M_Z}
\longmapsto
\ket{Z}_{T}\ket{Z}_{M_Z}.
\]

\State \textbf{Bob \(\rightarrow\) Alice:}
Send the computational-basis message
\[
M_Z=\ket{Z}.
\]

\State Alice initializes a permanent output register \(Z_A\) to \(0\)
and copies
\[
\ket{Z}_{M_Z}\ket{0}_{Z_A}
\longmapsto
\ket{Z}_{M_Z}\ket{Z}_{Z_A}.
\]

\State Alice reversibly clears the received message register:
\[
\ket{Z}_{M_Z}\ket{Z}_{Z_A}
\longmapsto
\ket{0}_{M_Z}\ket{Z}_{Z_A}.
\]

\State Bob runs his computation of \(Z\) backward, clearing \(T\)
and every temporary work register while leaving \(Z_B\) unchanged.

\State \textbf{Bob \(\rightarrow\) Alice:}
Return the computational-basis register
\[
M_{XA}=\ket{x,A}.
\]

\State Alice reversibly clears this returned copy using her retained
register \(XA\):
\[
\ket{x,A}_{XA}\ket{x,A}_{M_{XA}}
\longmapsto
\ket{x,A}_{XA}\ket{0}_{M_{XA}}.
\]

\State \textbf{Alice:} Output
\[
\widehat d_A\gets\Dec(Z_A).
\]

\State \textbf{Bob:} Output
\[
\widehat d_B\gets\Dec(Z_B).
\]

\State \Return
\[
\widehat d_A=\widehat d_B.
\]
\end{algorithmic}
\end{algorithm}

\paragraph{Technical Explanation of Algorithm~\ref{alg:pc-reversible}}
Algorithm~\ref{alg:pc-reversible} is an exact realization of the ideal
functionality in the prescribed-channel model. The parties first evaluate
\[
F_m((x,A),(y,B))
=
\HD(x,y)+A+B\pmod m
\]
by reversible classical computation. Once the residue \(Z\) has been
computed, it is copied into permanent output registers \(Z_A\) and \(Z_B\).
The parties then reverse the computation, returning every temporary message
and workspace register to its fixed initial state.

The role of reversibility is not to make the intermediate messages private.
Indeed, intermediate computational-basis messages may reveal substantial
information while they are present. Rather, reversibility ensures that,
under the prescribed-channel view model, the designated terminal views
contain no message history:
\[
V_A^{\mathrm{PC}}=(x,A,Z_A),
\qquad
V_B^{\mathrm{PC}}=(y,B,Z_B).
\]
These are exactly the views supplied by the ideal functionality. The pure
\(\eps\)-DP and \(2/\sinh\eps\) error guarantees therefore follow directly
from the ideal analysis.

This construction also explains why the prescribed-channel model cannot
support a meaningful quantum--classical separation. The model requires the
parties to apply the specified uncomputation and discard operations and scores
only the resulting designated terminal registers. An observer is not allowed
to retain an additional copy of the classical transcript unless that copy is
part of the prescribed channel. Consequently, the standard classical
two-party lower bounds, which score transcript-and-coins views, do not apply
directly to this terminal-view model.

This point is specific to the prescribed-channel model. Under Klauck's
honesty model, a classical player may copy every message, retain the
copy, and still send the prescribed outgoing messages. Thus, the same
classical lower bounds do apply to classical KHNP protocols. Proposition~\ref{prop:pc}
shows that it is necessary to distinguish these two notions of an honest
view before claiming a quantum advantage.

\subsection{Noise-Calibrated Guarded Protocol for Approximate QDP}

The approximate QDP protocol uses the same guarded coherent communication structure as the pure QDP construction in the main body; the key difference is how the cyclic-geometric noise is calibrated. If the pure protocol were run with noise parameter \(\alpha=\varepsilon\), it would automatically satisfy \((\varepsilon,\delta)\)-QDP for any \(\delta>0\), but the additional \(\delta\)-budget would not be used to improve accuracy. So to use up this budget, the protocol computes the exact finite-cycle hockey-stick divergence between neighboring shifts of the cyclic-geometric distribution and chooses a larger parameter \(\alpha_n^\star(\varepsilon,\delta)>\varepsilon\) that exactly meets the desired \((\varepsilon,\delta)\)-privacy level. Because larger \(\alpha\) makes the cyclic-geometric distribution more concentrated around zero, Alice and Bob can add less noise while preserving the target privacy guarantee.

The guarded quantum realization itself is unchanged. Alice and Bob sample independent noise shares from the more concentrated distribution \(q_{\alpha_n^\star}\), and Alice sends the same guarded non-orthogonal encoding used in the pure QDP protocol. Bob coherently appends the noisy Hamming distance residue and returns the prescribed pure state. The equal-Gram argument continues to apply because it does not depend on the value of the noise parameter. Consequently, each party's complete KHNP terminal view is still obtained by input-independent postprocessing of the corresponding ideal view, so the hockey-stick divergence can only decrease under the quantum realization. In particular, no additional \(\delta\)-loss is introduced by implementing the ideal functionality in the quantum setting.

The resulting protocol therefore converts the approximate-privacy budget directly into improved utility. Its expected error is bounded by
\[
U_n\left(\alpha_n^\star(\varepsilon,\delta)\right)+\gamma
\le
\frac{2}{\sinh\left(\alpha_n^\star(\varepsilon,\delta)\right)}+\gamma
<
\frac{2}{\sinh \varepsilon}+\gamma,
\]
while retaining \(O(n)\) communication and the same exact message-preserving privacy argument as in the pure QDP construction. Thus, approximate privacy provides a genuine accuracy improvement (rather than merely inheriting the pure QDP guarantee).

\begin{algorithm}
\caption{Noise-Calibrated Guarded Protocol for Approximate KHNP-QDP}
\label{alg:khnp-approximate}
\begin{algorithmic}[1]
\Require Alice has \(x\in\{0,1\}^n\); Bob has
\(y\in\{0,1\}^n\); privacy parameters
\(\varepsilon>0\) and \(0<\delta<1\);
utility parameter \(\gamma>0\).
\Ensure A common output with \((\varepsilon,\delta)\)-QDP in the
Klauck-honest nonpreemptive model.

\State Set
\[
m\gets 4n+1,
\qquad
\ell\gets\left\lceil\log_2m\right\rceil,
\qquad
\kappa\gets\frac{\gamma}{n+\gamma}.
\]
\State Fix a public fallback residue \(z_\perp\in\Z_m\).

\State Define, for \(\alpha\geq\varepsilon\),
\[
\Delta_{n,\varepsilon}(\alpha)
:=
\left(1-e^{\varepsilon-\alpha}\right)
\frac{1-e^{-2n\alpha}}
     {1+e^{-\alpha}-2e^{-(2n+1)\alpha}}.
\]

\State Let \(\alpha_n^\star(\varepsilon,\delta)\) be the unique
solution to
\[
\Delta_{n,\varepsilon}
\left(
\alpha_n^\star(\varepsilon,\delta)
\right)
=
\delta.
\]

\State Set
\[
\alpha\gets\alpha_n^\star(\varepsilon,\delta).
\]

\State \textbf{Alice:} Sample \(A\sim q_\alpha.\)
\State \textbf{Bob:} Sample \(B\sim q_\alpha.\)

\State Alice prepares
\[
\ket{\psi_{x,A}}_{CM}
=
\sqrt{1-\kappa}\,
\ket{0}_C\ket{x,A}_M
+
\sqrt{\kappa}\,
\ket{1}_C\ket{\perp}_M.
\]

\State \textbf{Alice \(\rightarrow\) Bob:}
Send the quantum message
\[
M_{A\rightarrow B}
=
\ket{\psi_{x,A}}_{CM}.
\]

\State Bob initializes an \(\ell\)-qubit residue register \(R\)
to \(\ket{0}\).

\State Bob applies a controlled reversible operation satisfying
\[
\ket{0}\ket{x,A}\ket{0}
\longmapsto
\ket{0}\ket{x,A}
\ket{\HD(x,y)+A+B\bmod m}
\]
and
\[
\ket{1}\ket{\perp}\ket{0}
\longmapsto
\ket{1}\ket{\perp}\ket{z_\perp}.
\]

\State The resulting prescribed return state is
\[
\begin{aligned}
&\ket{\phi^{y,B}_{x,A}}_{CMR}
={}&\\
&\sqrt{1-\kappa}\,
\ket{0}_C\ket{x,A}_M
\ket{\HD(x,y)+A+B\bmod m}_R
\\
&+
\sqrt{\kappa}\,
\ket{1}_C\ket{\perp}_M\ket{z_\perp}_R.
\end{aligned}
\]

\State \textbf{Bob \(\rightarrow\) Alice:}
Return the entire quantum message
\[
M_{B\rightarrow A}
=
\ket{\phi^{y,B}_{x,A}}_{CMR}.
\]

\State Alice measures \(C\) and \(R\) in the computational basis,
obtaining \((c,z)\).

\State Set
\[
Z\gets z.
\]
\Comment{On the fallback branch, \(Z=z_\perp\).}

\State \textbf{Alice \(\rightarrow\) Bob:}
Send the classical residue
\[
M_{A\rightarrow B}^{\mathrm{final}}=Z\in\Z_m.
\]

\State \textbf{Alice:} Output
\[
\widehat d_A\gets\Dec(Z).
\]

\State \textbf{Bob:} Output
\[
\widehat d_B\gets\Dec(Z).
\]

\State \Return
\[
\widehat d_A=\widehat d_B.
\]
\end{algorithmic}
\end{algorithm}

\paragraph{Numerical savings from approximate privacy}
The approximate QDP calibration permits a larger cyclic-geometric
parameter
\[
    \overline{\alpha}(\varepsilon,\delta)
    =
    \log\left(
        \frac{e^\varepsilon+\delta}{1-\delta}
    \right)
    >
    \varepsilon,
\]
and therefore a smaller noise bound. 

For convenience, denote the pure and approximate QDP expected error bounds as $E_{\mathrm{pure}}(\varepsilon)$ and $E_{\mathrm{approx}}(n,\varepsilon,\delta)$.
Ignoring the arbitrarily small additive guard penalty $\gamma$, the
approximate QDP calibration strictly improves on the pure-QDP bound:
\[
E_{\mathrm{approx}}(n,\varepsilon,\delta)
<
E_{\mathrm{pure}}(\varepsilon),
\]
since
\[
\alpha_n^\star(\varepsilon,\delta)>\varepsilon.
\]

Ignoring the common arbitrarily
small additive guard penalty \(\gamma\), the pure and approximate bounds
are 
\[
    E_{\mathrm{pure}}
    \le
    \frac{2}{\sinh\varepsilon},
    \qquad
    E_{\mathrm{approx}}
    \le
    \frac{2}{
        \sinh\!\left(
            \overline{\alpha}(\varepsilon,\delta)
        \right)
    }.
\]
The exact finite-\(n\) calibration
\(\alpha_n^\star(\varepsilon,\delta)\) is at least as accurate as this
explicit bound.

For example, let \(\varepsilon=0.5\) and \(\delta=0.01\). Pure privacy
gives
\[
    E_{\mathrm{pure}}
    \le
    \frac{2}{\sinh(0.5)}
    \approx 3.838.
\]
In the approximate regime,
\[
    \overline{\alpha}(0.5,0.01)
    =
    \log\left(
        \frac{e^{0.5}+0.01}{0.99}
    \right)
    \approx 0.5161,
\]
and hence
\[
    E_{\mathrm{approx}}
    \le
    \frac{2}{\sinh(0.5161)}
    \approx 3.708.
\]
Thus the explicit approximate-DP calibration lowers the expected-error
upper bound by approximately
\[
    3.838-3.708\approx 0.130,
\]
corresponding to a reduction of about \(3.4\%\).

The savings can be more pronounced in a stronger-privacy regime. For
\(\varepsilon=0.1\) and \(\delta=0.01\), pure privacy gives
\[
    E_{\mathrm{pure}}
    \le
    \frac{2}{\sinh(0.1)}
    \approx 19.967.
\]
Approximate privacy permits
\[
    \overline{\alpha}(0.1,0.01)
    =
    \log\left(
        \frac{e^{0.1}+0.01}{0.99}
    \right)
    \approx 0.1191,
\]
which gives
\[
    E_{\mathrm{approx}}
    \le
    \frac{2}{\sinh(0.1191)}
    \approx 16.759.
\]
The reduction is therefore approximately
\[
    19.967-16.759\approx 3.208,
\]
or about \(16.1\%\).

For small \(\varepsilon\) and \(\delta\),
\[
    \overline{\alpha}(\varepsilon,\delta)
    \approx
    \varepsilon+2\delta,
\]
so the leading comparison is approximately
\[
    E_{\mathrm{pure}}
    \approx
    \frac{2}{\varepsilon},
    \qquad
    E_{\mathrm{approx}}
    \approx
    \frac{2}{\varepsilon+2\delta}.
\]
Thus approximate privacy can provide a meaningful constant-factor
reduction when \(\delta\) is nonnegligible relative to \(\varepsilon\).
In the strong approximate-privacy regime
\(\delta=o(1/n)\), however, this numerical saving is lower order, and
both quantum bounds remain \(O(1)\) for constant \(\varepsilon\).

\subsection{A Common-Output Retention-Robust Baseline}

Quantum communication can always be used to lift
a classical protocol into the quantum setting.
Slight extensions of
previous works (e.g.,~\cite{McGregorMPRTV10, mcgregor2011limits}) can be used to obtain the guarantees
we outline in this section, since every DP algorithm is a
QDP algorithm. However, previous works do not use the exact
constructions we use in this work.
Thus, for the sake of completeness and ease of comparison,
we present our classical protocol here.

We define a clipping operator that will be used in this
section. For $a \le b$, define the \emph{clipping} operator
\[
\operatorname{clip}_{[a,b]}(z)
:=
\min\{b,\max\{a,z\}\}.
\]
Equivalently,
\[
\operatorname{clip}_{[a,b]}(z)
=
\begin{cases}
a, & z<a,\\
z, & a\le z\le b,\\
b, & z>b.
\end{cases}
\]

The
following asymmetric protocol uses randomized response for Alice's input
and geometric output noise for Bob's input; its released estimate is
common to both parties.

\begin{theorem}[Common-output pure RR-QDP baseline]
\label{thm:rr}
For every \(n\ge1\) and \(\eps>0\), there is a two-message protocol using
\(n+O(\log n)\) classical bits (or computational-basis qubits) whose
honest outputs agree with probability one and which is pure
\(\eps\)-RR-QDP under all quantifiers of Definition~\ref{def:rr}.  Its
expected absolute error is at most
\[
 \frac{\sqrt n}{2\sinh(\eps/2)}
 +\frac{1}{2\sinh^2(\eps/2)}.
\]
\end{theorem}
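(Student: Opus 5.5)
Alice applies randomized response to each coordinate of \(x\) with flip probability \(p=1/(1+e^{\eps/2})\)... but first check which per-bit parameter the error bound dictates.

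\emph{Calibrating the parameters.} Randomized response with per-bit parameter \(\eps_0\) and debiasing factor \(1/(1-2p)\) has per-coordinate variance \(p(1-p)/(1-2p)^2\). With \(p=1/(1+e^{\eps_0})\),
\[
\frac{p(1-p)}{(1-2p)^2}=\frac{e^{\eps_0}}{(e^{\eps_0}-1)^2}=\frac{1}{4\sinh^2(\eps_0/2)} .
\]
The standard deviation of the debiased sum is therefore \(\sqrt n/(2\sinh(\eps_0/2))\). Matching the first term of the bound requires \(\eps_0=\eps\), so Alice randomizes each bit at level \(\eps\), i.e.\ \(p=1/(1+e^\eps)\).

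\emph{The protocol.} Alice flips each bit \(x_i\) independently with probability \(p\) and sends \(\tilde x\) (\(n\) bits) to Bob. Bob computes the unbiased estimator
\[
S=\sum_i \frac{\mathbf 1[\tilde x_i\neq y_i]-p}{1-2p}.
\]
Bob then releases \(\widehat d=\clip_{[0,n]}(\operatorname{round}(S))+G\), or possibly \(\clip_{[0,n]}(\operatorname{round}(S)+G)\). Here \(G\) is two-sided geometric noise with parameter \(\eps/2\), adapted to the sensitivity of \(\operatorname{round}(S)\) in \(y\). He sends the \(O(\log n)\)-bit value \(\widehat d\) (or its clipped form) to Alice, and both output it.

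\emph{Privacy for Alice's input.} Bob's message from Alice is a product of \(n\) randomized-response bits. For \(x\sim x'\), the likelihood ratio of \(\tilde x\) is at most \(e^\eps\). An arbitrary CPTP Bob strategy with side information \(\sigma^y\) acts on \(\tilde x\otimes\sigma^y\). Alice's only subsequent action is to receive a message, and she sends nothing further. Bob's terminal state is therefore a channel, independent of \(x\), applied to \(\tilde x\otimes\sigma^y\). The Loewner inequality is preserved by tensoring with a fixed side state and by data processing; if Bob aborts, the conclusion holds all the more. This step is easy.

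\emph{Privacy for Bob's input.} Alice's view is \((x,\text{coins},\tilde x,\widehat d)\) together with anything a malicious Alice does. A malicious Alice may send an arbitrary quantum state in place of \(\tilde x\), possibly entangled with her memory and side register \(\sigma^x\). Honest Bob first measures the incoming register in the computational basis, obtaining a classical string \(\tilde x\). From that point his response depends on \(y\) only through the geometric mechanism applied to a function of \((\tilde x,y)\).

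The sensitivity of \(S\) in one coordinate of \(y\) is \(1/(1-2p)=\coth(\eps/2)\), not \(1\). The honest-Bob noise must therefore be calibrated to sensitivity \(\coth(\eps/2)\). Alternatively, Bob can compute the raw count \(c=|\{i:\tilde x_i\ne y_i\}|\), which has sensitivity \(1\) in \(y\), add geometric noise of parameter \(\eps\) to \(c\), and only afterwards debias, clip, and round as postprocessing. I will choose this second option: release \(c+G\) with \(G\) two-sided geometric of parameter \(\eps\), so that \(\Pr[G=g]\propto e^{-\eps|g|}\). The likelihood ratio is then at most \(e^\eps\) for every fixed \(\tilde x\). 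Conditioning on the measurement outcome, and using that a classical-quantum state with pointwise \(e^\eps\)-dominated classical parts is \(e^\eps\)-dominated, pure QDP follows for every malicious Alice. Her post-release operations are data processing.

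\emph{Accuracy.} The output is \(\widehat d=\clip_{[0,n]}\big((c+G-np)/(1-2p)\big)\). Clipping to \([0,n]\) does not increase the distance to \(d\in[0,n]\). Hence
\[
\mathbb E|\widehat d-d|\le \mathbb E|S-d|+\frac{\mathbb E|G|}{1-2p}.
\]
By Cauchy--Schwarz and the variance computation above, \(\mathbb E|S-d|\le \sqrt n/(2\sinh(\eps/2))\). For the noise term, \(\mathbb E|G|=1/\sinh\eps\) by the computation in Theorem~\ref{thm:ideal}, and \(1/(1-2p)=\coth(\eps/2)\). Therefore
\[
\frac{\coth(\eps/2)}{\sinh\eps}=\frac{\cosh(\eps/2)}{\sinh(\eps/2)\cdot 2\sinh(\eps/2)\cosh(\eps/2)}=\frac{1}{2\sinh^2(\eps/2)},
\]
which is exactly the second term of the bound.

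\emph{Communication and the main obstacle.} The messages are \(n\) bits plus one \(O(\log n)\)-bit integer, with \(G\) truncated or taken cyclically modulo \(O(n)\) so that it fits. The truncation is the main obstacle: a finite release requires a cyclic or truncated geometric distribution, as in \(q_\eps\), which must preserve the \(e^{\pm\eps}\) ratio and not enlarge \(\mathbb E|G|\). I would handle it as in Section~\ref{sec:ideal}. Release \(c+G\bmod m'\) with \(m'=4n+1\) and \(G\sim q_\eps\) on \(\mathbb Z_{m'}\), so the ratio bound holds exactly. Then decode with \(\Dec\) (Lemma~\ref{lem:decode}) before debiasing, which gives error at most \(d_{m'}(G,0)\). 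By the argument in Theorem~\ref{thm:ideal}, its expectation is at most \(1/\sinh\eps\).
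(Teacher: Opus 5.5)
Your proposal is correct and follows essentially the same route as the paper's proof: randomized response at level $\eps$, two-sided geometric noise of parameter $\eps$ added to the raw mismatch count $C$ (sensitivity one), and pointwise likelihood-ratio domination conditional on Bob's computational-basis measurement to handle a malicious Alice; accuracy then follows from Cauchy--Schwarz together with $\mathbb{E}|G|/s = 1/(2\sinh^2(\eps/2))$. The truncation you flag as the main obstacle is not actually needed, because $\clip_{[0,n]}((C+G-np)/s)$ is a function of the integer $C+G$ with only $O(n)$ interior values plus the two endpoints, so it already has an $O(\log n)$-bit encoding (the paper makes exactly this observation). Your cyclic $q_\eps$ plus $\Dec$ variant is nonetheless also valid.
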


\begin{proof}
Let \(p=(1+e^\eps)^{-1}\) and
\(s=1-2p=\tanh(\eps/2)\).  Independently for each coordinate, Alice sends
\(X_i'=x_i\) with probability \(1-p\) and \(1-x_i\) with probability
\(p\).  Bob computes
\[
 C=\sum_i\mathbf1\{X_i'\ne y_i\},
 \qquad
 \widetilde d=\frac{C-np}{s}.
\]
He samples an independent two-sided geometric random variable
\[
 \Pr[G=g]=\frac{1-e^{-\eps}}{1+e^{-\eps}}e^{-\eps|g|}
 \quad(g\in\Z),
\]
sets \(\widehat d\) to the clipping of
\(\widetilde d+G/s\) to \([0,n]\), and sends a finite encoding of this
clipped value to Alice.  Both parties output that value.

Fix any corrupted-Bob strategy and initial side state permitted by
Definition~\ref{def:rr}.  Alice's only input-dependent transmission is
the randomized-response vector.  A one-bit change of \(x\) changes its
likelihood by at most \(e^\eps\); the adversary's arbitrary CPTP
interaction, retention, or early termination is postprocessing.  Hence
Bob's complete state satisfies pure \(\eps\)-QDP.

Now fix any corrupted-Alice strategy.  Bob first measures the purported
vector in the computational basis, obtaining some \(X'\).  The induced
classical--quantum state of \(X'\) and Alice's retained register is
independent of \(y\).  Conditional on every \(X'\), changing one bit of
\(y\) changes \(C\) by at most one.  The geometric mechanism therefore
obeys the likelihood-ratio bound \(e^\eps\) pointwise in \(X'\).
Multiplying each pointwise inequality by the same conditional retained
state and summing proves the operator inequality for Alice's joint final
view.  This proves pure RR-QDP in both directions, for every allowed
initial side state.

In an honest run, \(\widetilde d\) is unbiased and
\[
 \operatorname{Var}(\widetilde d)
 =\frac{np(1-p)}{s^2}
 =\frac{n}{4\sinh^2(\eps/2)}.
\]
Moreover, \(\mathbb E|G|=1/\sinh\eps\) and
\[
 \frac{\mathbb E|G|}{s}
 =\frac{1}{\tanh(\eps/2)\sinh\eps}
 =\frac{1}{2\sinh^2(\eps/2)}.
\]
Since $d=\mathrm{HD}(x,y)\in[0,n]$ and
$\widehat d=\operatorname{clip}_{[0,n]}(z)$,
projection onto the interval $[0,n]$ cannot increase the distance
to $d$. Hence
\[
|\widehat d-d|
\le
|z-d|.
\]
Thus, clipping cannot increase distance from \(d\).  The triangle inequality
and Cauchy--Schwarz give the stated error bound.  After clipping there
are only \(O(n)\) possible interior values plus two endpoints, so the
common output uses \(O(\log n)\) bits.
\end{proof}

\begin{algorithm}
\caption{Common-Output Retention-Robust Randomized-Response Protocol}
\label{alg:retention-robust-baseline}
\begin{algorithmic}[1]
\Require Alice has \(x\in\{0,1\}^n\); Bob has
\(y\in\{0,1\}^n\); privacy parameter \(\varepsilon>0\).
\Ensure A common output satisfying pure
\(\varepsilon\)-retention-robust QDP.

\State Set
\[
p\gets\frac{1}{1+e^\varepsilon},
\qquad
s\gets 1-2p=\tanh(\varepsilon/2).
\]

\For{\(i=1,\ldots,n\)}
    \State Alice independently samples
    \[
    N_i\sim\operatorname{Bernoulli}(p).
    \]
    \State Alice sets
    \[
    X_i'\gets x_i\oplus N_i.
    \]
\EndFor

\State Let
\[
X'=(X_1',\ldots,X_n')\in\{0,1\}^n.
\]

\State \textbf{Alice \(\rightarrow\) Bob:}
Send the classical message
\[
M_{A\rightarrow B}=X'.
\]
\Comment{Equivalently, send the basis state \(\ket{X'}\).}

\State If the message was sent as qubits, honest Bob measures it
in the computational basis.

\State Bob computes
\[
C
\gets
\sum_{i=1}^n
\mathbf{1}\{X_i'\neq y_i\}.
\]

\State Bob forms the debiased estimator
\[
\widetilde d
\gets
\frac{C-np}{s}.
\]

\State Bob independently samples a two-sided geometric random
variable \(G\in\Z\) with
\[
\Pr[G=g]
=
\frac{1-e^{-\varepsilon}}
     {1+e^{-\varepsilon}}
e^{-\varepsilon|g|}
\qquad (g\in\Z).
\]

\State Bob computes
\[
\widehat d
\gets
\clip_{[0,n]}
\left(
\widetilde d+\frac{G}{s}
\right).
\]

\State Bob encodes the result using a public finite encoding:
\[
W\gets\Enc_\varepsilon(\widehat d)
\in\{0,1\}^{O(\log n)}.
\]

\State \textbf{Bob \(\rightarrow\) Alice:}
Send the classical message
\[
M_{B\rightarrow A}=W.
\]

\State Alice decodes
\[
\widehat d_A
\gets
\Enc_\varepsilon^{-1}(W).
\]

\State Bob sets
\[
\widehat d_B\gets\widehat d.
\]

\State \Return
\[
\widehat d_A=\widehat d_B.
\]
\end{algorithmic}
\end{algorithm}

\paragraph{Technical Explanation of
Algorithm~\ref{alg:retention-robust-baseline}}
Algorithm~\ref{alg:retention-robust-baseline} is designed for the stronger
retention-robust model, in which a corrupted party may apply arbitrary CPTP
maps, retain auxiliary systems, change its prescribed messages, or abort.
Unlike the KHNP construction, its security does not rely on forcing a party
to return a non-orthogonal quantum state. Instead, every input-dependent
message is differentially private before it reaches the recipient.

Alice first applies independent randomized response to every input bit. With
\[
p=\frac{1}{1+e^\varepsilon},
\]
she sends
\[
X_i'
=
\begin{cases}
x_i, & \text{with probability }1-p,\\
1-x_i, & \text{with probability }p.
\end{cases}
\]
For adjacent inputs \(x\) and \(x'\), only one coordinate distribution
changes, and the likelihood ratio of the complete vector \(X'\) is at most
\(e^\varepsilon\). Therefore \(X'\) is already pure
\(\varepsilon\)-DP for Alice's input. Any operation subsequently performed by
a corrupted Bob (including coherent processing, retention, measurement, or
early termination) is postprocessing of this private message and cannot
weaken its privacy guarantee.

In an honest execution, Bob computes
\[
C=\sum_{i=1}^n\mathbf{1}\{X_i'\neq y_i\}.
\]
Let
\[
d=\HD(x,y),
\qquad
s=1-2p=\tanh(\varepsilon/2).
\]
Each coordinate that originally agrees contributes a mismatch with
probability \(p\), whereas each coordinate that originally disagrees
contributes a mismatch with probability \(1-p\). Hence
\[
\mathbb{E}[C\mid x,y]
=
np+s d,
\]
so
\[
\widetilde d=\frac{C-np}{s}
\]
is an unbiased estimate of \(d\).

The first message does not protect Bob's input from Alice, since Alice
generated it without using \(y\). Bob therefore adds an independent
two-sided geometric variable \(G\) before sending the common estimate back:
\[
\widehat d
=
\clip_{[0,n]}
\left(
\frac{C-np+G}{s}
\right).
\]
Conditional on every possible received value \(X'\), changing one bit of
\(y\) changes \(C\) by at most one. The geometric distribution satisfies the
corresponding \(e^\varepsilon\) likelihood-ratio bound, so \(C+G\), and hence
\(\widehat d\), is pure \(\varepsilon\)-DP for Bob's input.

This privacy argument remains valid even when Alice deviates from the
protocol. Honest Bob first measures Alice's purported message in the
computational basis, obtaining a classical string \(X'\). Conditional on that
outcome, Alice's retained quantum register is independent of \(y\). The
pointwise geometric-mechanism inequality can therefore be multiplied by the
same conditional retained state and summed over \(X'\), yielding the required
operator inequality for Alice's complete final view.

The two uses of \(\varepsilon\) do not compose to \(2\varepsilon\). Privacy is
required separately in the two directions. To protect \(x\) from Bob, the
only \(x\)-dependent transmission is Alice's randomized-response vector. To
protect \(y\) from Alice, Alice's initial message is independent of \(y\), and
only Bob's geometric release consumes privacy. Each protected input therefore
incurs a single \(\varepsilon\)-privacy loss.

For utility, independence across coordinates gives
\[
\operatorname{Var}(\widetilde d)
=
\frac{np(1-p)}{s^2}
=
\frac{n}{4\sinh^2(\varepsilon/2)}.
\]
Thus,
\[
\mathbb{E}
\left[
\left|\widetilde d-d\right|
\right]
\leq
\sqrt{\operatorname{Var}(\widetilde d)}
=
\frac{\sqrt n}{2\sinh(\varepsilon/2)}.
\]
Moreover,
\[
\frac{\mathbb{E}|G|}{s}
=
\frac{1}{2\sinh^2(\varepsilon/2)}.
\]
Since clipping to \([0,n]\) cannot increase the distance from
\(d\in[0,n]\), the triangle inequality yields
\[
\mathbb{E}
\left[
\left|\widehat d-d\right|
\right]
\leq
\frac{\sqrt n}{2\sinh(\varepsilon/2)}
+
\frac{1}{2\sinh^2(\varepsilon/2)}.
\]

The protocol uses \(n+O(\log n)\) classical bits, or equivalently
computational-basis qubits. It establishes a fully retention-robust baseline
but does not give a quantum advantage: its leading error remains
\(O(\sqrt n)\) for constant \(\varepsilon\). Whether genuinely quantum
communication can improve this accuracy while remaining secure against all
retention-capable CPTP strategies is left open.

\subsection{Statement on AI Use}

The authors generated the main ideas, algorithms, and protocols.
The pure DP protocol is also an approximate DP protocol
(for $\delta = 0$). However, using GPT 5.6 Sol,
the authors optimized the noise mechanism to obtain a more
accurate approximate DP protocol.
Furthermore, 
the images were generated in LaTeX via the use of GPT 5.6 Sol.
The authors independently checked the
mathematical expressions and figures.

\end{document}